\documentclass[12pt,reqno]{amsart}%
\usepackage{amsmath,amsthm,indentfirst}
\usepackage{amsfonts}
\usepackage{amssymb}
\usepackage{amsbsy,bm}
\usepackage{amsaddr}

\usepackage{graphicx}

\setlength{\textwidth}{16cm}
\setlength{\textheight}{23cm}
\setlength{\oddsidemargin}{0cm}
\setlength{\evensidemargin}{0cm}
\setlength{\topmargin}{0cm}
\renewcommand{\mathbf}{\bm}
\theoremstyle{plain}
\newtheorem{theorem}{Theorem}

\newtheorem{proposition}[theorem]{Proposition}

\theoremstyle{definition}
\newtheorem{definition}[theorem]{Definition}

\newtheorem{corollary}[theorem]{Corollary}

\theoremstyle{remark}
\newtheorem{remark}[theorem]{Remark}

\numberwithin{equation}{subsection}
\numberwithin{theorem}{subsection}
\tolerance=9000 \hbadness=9000
\normalfont\upshape
\pagestyle{headings}

\AtBeginDocument{\addtocontents{toc}{\vspace{-1em}}}

\title{\textbf{Double-Graded Supersymmetric Quantum Mechanics}}
 \author{\scshape\textbf{ A\lowercase{ndrew} J\lowercase{ames} B\lowercase{ruce}}}
   \address{Mathematics Research Unit, University of Luxembourg,\\ Maison du Nombre 6, avenue de la Fonte,
L-4364 Esch-sur-Alzette}
   \email{andrewjamesbruce@googlemail.com}
\author{\scshape\textbf{ S\lowercase{teven} D\lowercase{uplij}}}
   \address{ 
Universit\"at M\"unster,
D-48149 M\"unster,
Deutschland}
   \email{douplii@uni-muenster.de}

\date{April 12, 2019}
\usepackage[usenames,dvipsnames]{xcolor}
\usepackage{color}
\subjclass[2010]{16W25, 53B05, 53B15, 53D17, 58A50, 81Q60, 81Q65, 81S99}

\begin{document}

\renewcommand{\leftmark}{\scshape\textbf{ A\lowercase{ndrew} J\lowercase{ames} B\lowercase{ruce}} and  \textbf{ S\lowercase{teven} D\lowercase{uplij}}}
\thispagestyle{empty}
\begin{abstract}
A quantum mechanical model that realizes the $ \mathbb{Z}_2 \times \mathbb{Z}_2$-graded generalization of the one-dimensional supertranslation algebra is proposed. This model shares some features with the well-known Witten model and is related to parasupersymmetric quantum mechanics, though the model is not directly equivalent to either of these. The purpose of this paper is to show that novel ``higher gradings'' are possible in the context of non-relativistic quantum mechanics.
\end{abstract}

\maketitle
\tableofcontents
\thispagestyle{empty}
\newpage

\mbox{}
\vskip 1cm

\section{\scshape Introduction}

Supersymmetry in the context of string theory and quantum field theory has a long history dating back to the 1970s \cite{gol/lik,vol/aku,wes/zum2} (for the history
and development of supersymmetry, see, e.g., \cite{dup/sie/bag,kan/shi}). However, we know that if Nature does indeed utilize supersymmetry then supersymmetry must be broken. We have in mind here models of high energy physics. Superymmetry can also be realized in condensed matter physics (see \cite{efetov} and later works). Understanding the possible mechanisms for breaking supersymmetry is vital to construct realistic theories.

In 1981, Witten \cite{Wit} introduced (non-relativistic) supersymmetric quantum mechanics as a toy model to discuss supersymmetry breaking via instanton effects. This model is the simplest non-trivial model that exhibits the general features of a supersymmetric  field theory. Since then the subject has grown and many useful tools in quantum mechanics have been developed with supersymmetric quantum mechanics as their root.

In particular, Witten's model provides an alternative scheme to the factorization method;  gives rise to an understanding of exactly solvable potentials via shape invariance; has applications in the inverse scattering method, etc. (see, e.g., \cite{Bag,Coo/Kha/Uda,Suk,Szc}).  For the most clear exposition of
Witten's theory, see \cite{fub/rab}, followed by the many papers on ``form-invariance''.

Since Witten's original work \cite{Wit}, there has been several generalizations of his quantum mechanical model including parasupersymmetric quantum mechanics \cite{Rub/Spi}, orthosupersymmetric quantum mechanics \cite{Kha/Mis/Raj}, and fractional supersymmetric quantum mechanics \cite{Dur}, all of which are motivated by generalizations of standard statistics. Independently, the use of para-Grassmann variables in mechanics has been implemented by Gershun and Tkach \cite{ger/tka84,ger/tka85}.

Here we examine another generalization in which the graded structure of the theory is different from the standard supersymmetric quantum mechanics. In particular, we employ a ``double-grading'' using $\mathbb{Z}_2^2 := \mathbb{Z}_2 \times \mathbb{Z}_2$ rather than just the standard grading via $\mathbb{Z}_2$.   \par
Recently, in \cite{Bru} it was proposed a generalization of the $\mathcal{N}$-extended supersymmetry algebra to the setting of $\mathbb{Z}_2^n$-Lie algebras introduced in \cite{mol,mol2010}. The main aim of \cite{Bru} was to establish a geometric understanding of the algebra in terms of $\mathbb{Z}_2^n$-manifolds (see \cite{Cov/Gra/Pon1,Cov/Gra/Pon2}), i.e., to build a generalization of super-Minkowski space-time  (see, also \cite{tol14a}). Loosely, $\mathbb{Z}_2^n$-manifolds are `manifolds' for which the structure sheaf has a $\mathbb{Z}_2^n$-grading and the commutation rules for the local coordinates come from the standard scalar product. The case of $n=1$ is just the theory of standard supermanifolds
\cite{berezin,lei1,leites}.  However, in \cite{Bru} no examples of classical or quantum systems that  have the proposed $\mathbb{Z}_2^n$-Lie algebra as a symmetry were given. We rectify this omission here. \par
We present a \emph{double-graded}  supersymmetric quantum mechanical model on the real line that exhibits the required $\mathbb{Z}_2^n$-symmetry for the case of $n=2$. This model resembles Witten's version of supersymmetric quantum mechanics \cite{Wit} ( also see \cite{aku/pas,gen/kri}), as well as various models with extended supersymmetry \cite{ger/tka79,pas1986,aku/kud,fub/rab,pas/top}. However, the  differences are as follows:
\begin{itemize}
\item the underlying algebra will be a $\mathbb{Z}_2^2$-Lie algebra and not a super Lie algebra (i.e., a $\mathbb{Z}_2$-Lie algebra),
\item the Hilbert space will be $\mathcal{H} :=  L^2(\mathbb{R}) \otimes \mathbb{C}^4$ in order to take account of the four possible `spin' states, and
\item we will allow for a central charge.
\end{itemize}
\par

We remark that $\mathbb{Z}_2^2$-gradings appear in several guises within mathematical physics, for example in relation to the symmetries of the L\'{e}vy-Leblond equation \cite{Aiz/Kuz/Tan/Top} and parastatistics \cite{Gre,Vol} (see, also \cite{tol14}). We will show how the models presented here are related to, but not equivalent to models that posses parasupersymmetries \cite{Rub/Spi}  (see \cite{Bec/Deb1,Bec/Deb2,Kha1,ger/tka84}, on a \textit{filtered} generalization of Lie superalgebras, see \cite{lei/ser90,lei/ser01,bar95,vol85}).\par
The reader should observe that Scheunert proved a theorem reducing ``colored'' Lie algebras to either Lie algebras or Lie superalgebras \cite{Sch}, while Neklyudova proved an analogue of this theorem for ``colored'' graded-commutative and associative algebras \cite{leites13}. However, these theorems deal only with algebras, not with pairs (algebra, module over it), and for such pairs the direct analogues of the Neklyudova-Scheunert theorem do not hold. This fact is an indicator that our construction, as well as \cite{Cov/Gra/Pon1,Cov/Gra/Pon2} (and the earlier works \cite{cov/ovs/pon,mor/ovs10}), are not meaningless. \par
In Section \ref{sec:Prelim} we present the $1$-dimensional  $\mathbb{Z}_2^2$-supertranslation algebra, we define quantum mechanical systems that are  $\mathbb{Z}_2^2$-supersymmetric and derive some direct consequences of this definition. We will assume the reader is already somewhat familiar with Witten's model.   In Section \ref{sec:ZSUSYQM} we define and examine a specific model that is akin to Witten's model.  Some closing remarks are in Section \ref{sec:Concl}.

\section{\scshape Preliminaries}\label{sec:Prelim}
\subsection{$\mathbb{Z}_2^2$-Lie Algebras}
Let us recall the notion of a $\mathbb{Z}_2^2$-Lie algebra (see \cite{rit/wyl,Sch}). The extension to $\mathbb{Z}_2^n$-Lie algebras is straightforward, though in this note we will not consider the general case. A $\mathbb{Z}_2^2$-graded vector space is a vector space (over $\mathbb{R}$ or $\mathbb{C}$) that is the direct sum of homogeneous vector spaces
$$\mathfrak{g} = \mathfrak{g}_{00} \oplus \mathfrak{g}_{01} \oplus \mathfrak{g}_{11} \oplus \mathfrak{g}_{10}\,.$$
Note that we have chosen a particular ordering for the elements of $\mathbb{Z}_2^2 := \mathbb{Z}_2 \times \mathbb{Z}_2$. This ordering will be convenient for our applications in quantum mechanics.  We will denote the $\mathbb{Z}_2^2$-degree of  an element of $\mathfrak{g}$, $a$ (say), as $\mathrm{deg}(a) \in \mathbb{Z}_2^2$.   We define the \emph{even subspace} of $\mathfrak{g}$ to be $ \mathfrak{g}_{00} \oplus \mathfrak{g}_{11}$ and the \emph{odd subspace} of $\mathfrak{g}$ to be $ \mathfrak{g}_{01} \oplus  \mathfrak{g}_{10}$. That is we pass to a $\mathbb{Z}_2$-grading via the total degree, i.e., the sum of the components of the $\mathbb{Z}_2^2$-degree. We will denote the standard scalar product on $\mathbb{Z}_2^2$ by $\langle -, - \rangle$. That is, if $\mathrm{deg}(a) = (\gamma_1, \gamma_2)$ and $\mathrm{deg}(b) = (\gamma^\prime_1, \gamma^\prime_2)$, then $\langle \mathrm{deg}(a), \mathrm{deg}(b) \rangle = \gamma_1 \gamma^\prime_1 +  \gamma_2 \gamma^\prime_2$. \par
A \emph{$\mathbb{Z}_2^2$-Lie algebra} is a  $\mathbb{Z}_2^2$-graded vector space equipped with a bi-linear operation, $[-,-]$, such that for homogeneous elements  $a,b$ and $c\in \mathfrak{g}$, the following are satisfied:
\begin{enumerate}
\item $\mathrm{deg}([a,b]) = \mathrm{deg}(a) + \mathrm{deg}(b)$,
\item $[a,b] = {-} (-1)^{\langle \mathrm{deg}(a), \mathrm{deg}(b) \rangle} [b,a]$,
\item $[a,[b,c] ] = [[a,b],c] = + (-1)^{\langle \mathrm{deg}(a), \mathrm{deg}(b) \rangle} [b, [a,c]]$.
\end{enumerate}
Extension to inhomogeneous elements is via linearity.
\begin{remark}
We have written the Jacobi identity for a $\mathbb{Z}_2^2$-Lie algebra in Loday-Leibniz form (see \cite{Lod}). Note that this form has a direct interpretation independently of the symmetry of the bracket. In particular, the notion of a
$\mathbb{Z}_2^2$-Loday-Leibniz algebra is clear, i.e., just drop the symmetry condition.
\end{remark}
\subsection{The $\mathbb{Z}_2^2$-supertranslation algebra in one dimension}
The starting place is the following $\mathbb{Z}_2^2$-graded Lie algebra (see \cite{rit/wyl,Sch}) with generators $H_{00}, Q_{01}, Q_{10}$ and $Z_{11}$ of $\mathbb{Z}_2^2$-degrees $(0,0)$, $(0,1)$, $(1,0)$  and $(1,1)$, respectively, given by
\begin{align}\label{eqn:Z22SuperAlg}
& [Q_{01},Q_{01}] = [Q_{10}, Q_{10}] = \frac{1}{2}H_{00}, && [Q_{10},Q_{01}] = \frac{1}{2}Z_{11}\,,
\end{align}
where all other Lie brackets vanish - this belongs to the $C(2,s)$-family of generalized Lie algebras following \cite{rit/wyl}. The above algebra was proposed by Bruce \cite{Bru} and will be referred to as the \emph{$\mathbb{Z}_2^2$-supertranslation algebra (in one dimension)}. Up to conventions, this is essentially two copies of supersymmetry in one-dimension and a central extension. However, one must take care as we have $\mathbb{Z}_2^2$-graded Lie brackets, i.e., we employ a novel double-grading. This algebra should be compared with the extended SUSY algebra with central charges given in \cite{fau/spe,fuj/has/nis,Nie/Nik}.\par
Let us assume that we can find a representation of this $\mathbb{Z}_2^2$-Lie algebra as operators on some Hilbert space -- the space of states of some quantum system -- with respect to the $\mathbb{Z}_2^2$-graded commutator.  Explicitly, if we have two homogeneous operators, $X$ and $Y$, then the commutator is defined as
\begin{equation}\label{eqn:Z22Com}
[X,Y] := X \circ Y {-}(-1)^{\langle \textnormal{deg}(X),\textnormal{deg}(Y) \rangle } \: Y \circ X.
\end{equation}
It is easy to verify that we the $\mathbb{Z}_2^2$-graded commutator satisfies the requirements to define a $\mathbb{Z}_2^2$-Lie algebra. Extension to inhomogeneous operators is via linearity.  Furthermore, we will take the $\mathbb{Z}_2^2$-\emph{supercharges} $Q_{01}$ and $Q_{10}$ to be Hermitian. Thus, the operator $H_{00}$ is also Hermitian, naturally, this will have the interpretation as a \emph{Hamiltonian}, while the \emph{central charge} $Z_{11}$ will be anti-Hermitian.  From these considerations, we are led to the following definition.
\begin{definition}\label{def:ZQMSystem}
A \emph{$\mathbb{Z}_2^2$-supersymmetric quantum mechanical system}  (\emph{$\mathbb{Z}_2^2$-SUSY QM system}) or a \emph{double-graded supersymmetric quantum mechanical system} is the quintuple
$$\big \{\mathcal{H}; H_{00}, Q_{01}, Q_{10}, Z_{11}   \big\},$$
where
\begin{enumerate}
\item $\mathcal{H} = \mathcal{H}_{00} \oplus \mathcal{H}_{01} \oplus \mathcal{H}_{11} \oplus \mathcal{H}_{10}$, is a $\mathbb{Z}_2^2$-graded Hilbert space, and
\item  $H_{00}, Q_{01}$ and $Q_{10}$ are Hermitian operators and  $Z_{11}$ is an anti-Hermitian operator all of the indicated $\mathbb{Z}_2^2$-degrees, that satisfy the  $\mathbb{Z}_2^2$-supertranslation algebra \eqref{eqn:Z22SuperAlg} with respect to the $\mathbb{Z}_2^2$-graded commutator \eqref{eqn:Z22Com}.
\end{enumerate}
\end{definition}
Directly from the definition of a double-graded supersymmetric quantum mechanical system we can derive some results that apply for all such systems.
\begin{theorem}
For any $\mathbb{Z}_2^2$-SUSY QM system the expectation value of the Hamiltonian operator $H_{00}$ is positive.
\end{theorem}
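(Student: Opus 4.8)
The plan is to express the Hamiltonian directly as the square of a Hermitian operator by unwinding the $\mathbb{Z}_2^2$-graded commutator, after which positivity is immediate. First I would evaluate the bracket $[Q_{01}, Q_{01}]$ appearing in \eqref{eqn:Z22SuperAlg} by means of the definition \eqref{eqn:Z22Com}. The crucial input is the sign factor $(-1)^{\langle \mathrm{deg}(Q_{01}), \mathrm{deg}(Q_{01}) \rangle}$: since $\mathrm{deg}(Q_{01}) = (0,1)$, the scalar product is $\langle (0,1),(0,1) \rangle = 1$, so the factor equals $-1$. Hence the graded commutator of $Q_{01}$ with itself is in fact an \emph{anti}commutator, $[Q_{01},Q_{01}] = Q_{01}\circ Q_{01} + Q_{01}\circ Q_{01} = 2\, Q_{01}^2$.

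Combining this with the defining relation $[Q_{01},Q_{01}] = \tfrac{1}{2} H_{00}$ yields $H_{00} = 4\, Q_{01}^2$ (and, by the identical computation with $Q_{10}$, also $H_{00} = 4\, Q_{10}^2$). Now I would invoke the hypothesis that $Q_{01}$ is Hermitian, so that $Q_{01}^\dagger = Q_{01}$. For an arbitrary state $\psi \in \mathcal{H}$ lying in the domain of the relevant operators, this gives
\begin{equation*}
\langle \psi | H_{00} | \psi \rangle = 4\, \langle \psi | Q_{01}^2 | \psi \rangle = 4\, \langle Q_{01}\psi | Q_{01}\psi \rangle = 4\, \| Q_{01}\psi \|^2 \geq 0,
\end{equation*}
which is the asserted positivity of the expectation value.

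There is no serious obstacle here; the entire content of the statement is the sign bookkeeping in the first step. The only point demanding care is that the $\mathbb{Z}_2^2$-graded commutator of two generators of degree $(0,1)$ (or $(1,0)$) is a genuine anticommutator rather than a commutator, and it is exactly this that forces $H_{00}$ to be a non-negative operator. Had the pertinent scalar product been even, the same manipulation would have produced a vanishing commutator $[Q_{01},Q_{01}] = 0$ and the argument would collapse, so the positivity genuinely reflects the specific $\mathbb{Z}_2^2$-degree assignment in \eqref{eqn:Z22SuperAlg}. I would also remark that ``positive'' should be read as non-negative (i.e.\ $\geq 0$), with equality holding precisely on states annihilated by $Q_{01}$, which are the candidate ground states of the system.
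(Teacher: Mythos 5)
Your proof is correct and follows essentially the same route as the paper's: both extract $H_{00} = 4\,Q_{\mathbf{i}}^\dagger Q_{\mathbf{i}}$ (for $\mathbf{i}=01$ or $10$) from the graded self-commutator relation together with Hermiticity of the charges, and conclude $\langle \psi | H_{00} | \psi \rangle = 4 \,\| Q_{\mathbf{i}} |\psi\rangle \|^2 \geq 0$. The only difference is that you make explicit the sign bookkeeping showing that the $\mathbb{Z}_2^2$-graded bracket of $Q_{01}$ with itself is a genuine anticommutator, a point the paper uses implicitly when writing $4\,Q_{\mathbf{i}}^\dagger Q_{\mathbf{i}} = H_{00}$.
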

\begin{proof}
This follows in exactly the same way as standard supersymmetric quantum mechanics. In particular, as we have Hermitian charges we can write $4 Q_{\mathbf{i}}^\dag Q_{\mathbf{i}}  = H_{00}$, here $\mathbf{i} = 01$ or $10$. Then for any state
$$\langle \psi | \frac{1}{4} H_{00} |\psi \rangle = \langle \psi | Q_{\mathbf{i}}^\dag Q_{\mathbf{i}} |\psi \rangle  = \parallel Q_{\mathbf{i}} |\psi\rangle   \parallel^2  ~\geq 0 \,.$$
\end{proof}
That is, just as in standard supersymmetric quantum mechanics the energy is always positive for \underline{any} $\mathbb{Z}_2^2$-SUSY QM system. \par
By definition a $\mathbb{Z}_2^2$-SUSY QM system has good  $\mathbb{Z}_2^2$-supersymmetry  if and only if there exists at least one state that is annihilated by both of the $\mathbb{Z}^2_2$-supercharges $Q_{01}$ and $Q_{10}$. Clearly, this implies that such states have zero energy as $H_{00} \sim Q_{01}^2 = Q_{10}^2$. Otherwise, we say that $\mathbb{Z}_2^2$-supersymmetry is broken. Moreover, it is clear from the fact that $Z_{11} \sim [Q_{01}, Q_{01}]$ that zero energy states are also annihilated by the central charge. We have thus proved the following:
\begin{proposition}\
\begin{enumerate}
\item A $\mathbb{Z}_2^2$-SUSY QM system has good $\mathbb{Z}_2^2$-supersymmetry if and only if the vacuum state has zero energy, otherwise the energy is strictly positive, and
\item if a $\mathbb{Z}_2^2$-SUSY QM system has good $\mathbb{Z}_2^2$-supersymmetry, then the vacuum state is annihilated by the central charge.
\end{enumerate}
\end{proposition}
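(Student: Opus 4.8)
The plan is to leverage the just-proved positivity theorem together with the explicit bracket relations \eqref{eqn:Z22SuperAlg}, exactly as one does in Witten's model. From the theorem we have $H_{00} = 4\, Q_{\mathbf{i}}^\dag Q_{\mathbf{i}}$ for $\mathbf{i} = 01$ and $\mathbf{i} = 10$; since the supercharges are Hermitian this is the same as $H_{00} = 4\, Q_{01}^2 = 4\, Q_{10}^2$, and this relation will do all the work for part (1). I would prove the biconditional by establishing the two implications separately and then read off the ``otherwise'' clause by contraposition.

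For the forward implication, suppose the system has good $\mathbb{Z}_2^2$-supersymmetry, so there is a state $|\psi\rangle$ with $Q_{01}|\psi\rangle = Q_{10}|\psi\rangle = 0$. Applying $H_{00} = 4\,Q_{01}^2$ gives $H_{00}|\psi\rangle = 0$, and by the positivity theorem no state can have lower energy, so $|\psi\rangle$ is a vacuum of zero energy. For the reverse implication, suppose the vacuum $|\psi_0\rangle$ has zero energy, i.e.\ $\langle\psi_0|H_{00}|\psi_0\rangle = 0$. Using $H_{00} = 4\,Q_{\mathbf{i}}^\dag Q_{\mathbf{i}}$ exactly as in the proof of the theorem,
$$0 = \langle\psi_0|H_{00}|\psi_0\rangle = 4\parallel Q_{\mathbf{i}}|\psi_0\rangle\parallel^2$$
for each $\mathbf{i}$, so $Q_{01}|\psi_0\rangle = Q_{10}|\psi_0\rangle = 0$ by positive-definiteness of the norm; this is precisely good $\mathbb{Z}_2^2$-supersymmetry. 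Finally, if supersymmetry is broken then by the reverse implication the vacuum energy cannot vanish, and since the positivity theorem forbids negative values it must be strictly positive.

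For part (2) I would simply evaluate the central charge on the good-supersymmetry vacuum. The relevant scalar product is $\langle (1,0),(0,1)\rangle = 0$, so the defining bracket $[Q_{10},Q_{01}] = \tfrac12 Z_{11}$ is an \emph{ordinary} commutator, giving $Z_{11} = 2\,(Q_{10}Q_{01} - Q_{01}Q_{10})$. Acting on the vacuum $|\psi_0\rangle$, both terms vanish because $Q_{01}|\psi_0\rangle = Q_{10}|\psi_0\rangle = 0$, hence $Z_{11}|\psi_0\rangle = 0$.

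There is no serious obstacle here: the whole argument is a transcription of the standard Witten-model reasoning, and the graded signs never intervene in a harmful way. The only points demanding minor care are bookkeeping ones -- confirming via the scalar product that the self-brackets of $Q_{01},Q_{10}$ produce the anticommutator-type relation $H_{00} = 4Q_{\mathbf{i}}^2$ while the mixed bracket defining $Z_{11}$ is a genuine commutator, and invoking positive-definiteness of the Hilbert-space norm at the one place (the reverse implication) where a vanishing expectation value must be upgraded to the vanishing of the state $Q_{\mathbf{i}}|\psi_0\rangle$ itself.
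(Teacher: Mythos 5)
Your proof is correct and takes essentially the same route as the paper: the paper's (very terse) argument likewise combines $H_{00} = 4\,Q_{\mathbf{i}}^\dag Q_{\mathbf{i}} = 4\,Q_{\mathbf{i}}^2$ with the positivity theorem for part (1), and expresses $Z_{11}$ through the mixed bracket of the two supercharges for part (2). Your write-up is merely more explicit -- in particular the norm argument upgrading $\langle\psi_0|H_{00}|\psi_0\rangle=0$ to $Q_{\mathbf{i}}|\psi_0\rangle=0$, and the observation that $[Q_{10},Q_{01}]$ is an ordinary commutator since $\langle (1,0),(0,1)\rangle=0$, both of which the paper leaves implicit.
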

\subsection{The relation with parasupersymmetry}
In the context of parasupersymmetric quantum mechanics  (see \cite{Bec/Deb1,Rub/Spi}), Beckers and Debergh \cite{Bec/Deb2} define, via Green's ansatz (see \cite{Gre}), the  following  algebra (here $\{- ,-\}$ is the usual anticommutator and $[-,-]$ the usual commutator)
\begin{align}\label{eq:BDalg}
& \{Q_i, Q_i \} = \{Q_i^\dag,  Q_i^\dag \} =0 , && \{Q_i,  Q_i^\dag \} =H, \\
\nonumber & [Q_i, Q_j] = [Q_i ,Q_j^\dag] = [Q_i^\dag,  Q_j^\dag] =0 ~ (i\neq j), && [H, Q_i]=0,
\end{align}
for $i,j = 1,2$.
\begin{proposition}
There exists a non-canonical morphism from the   Beckers--Debergh algebra \eqref{eq:BDalg} to the $\mathbb{Z}_2^2$-supertranslation algebra \eqref{eqn:Z22SuperAlg} with $Z_{11} =0$.
\end{proposition}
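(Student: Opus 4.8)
The plan is to produce the morphism at the level of the $\mathbb{Z}_2^2$-graded bracket rather than the associative envelope, and to set up the correspondence by realizing the generators of \eqref{eqn:Z22SuperAlg} through the Hermitian (``real'') parts of the Beckers--Debergh charges. First I would translate both presentations into ordinary-operator language. Using the sign rule \eqref{eqn:Z22Com}, the self-brackets $[Q_{01},Q_{01}]$ and $[Q_{10},Q_{10}]$ are anticommutators, while $[Q_{10},Q_{01}]$ is an \emph{ordinary} commutator since $\langle (1,0),(0,1)\rangle = 0$. Hence \eqref{eqn:Z22SuperAlg} with $Z_{11}=0$ says precisely that $Q_{01},Q_{10}$ are Hermitian, that $Q_{01}^2 = Q_{10}^2 = \tfrac14 H_{00}$, and that $Q_{01}$ and $Q_{10}$ commute. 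On the other side I would record that assigning the degrees $\mathrm{deg}(Q_1)=\mathrm{deg}(Q_1^\dag)=(0,1)$ and $\mathrm{deg}(Q_2)=\mathrm{deg}(Q_2^\dag)=(1,0)$ turns \eqref{eq:BDalg} into a set of $\mathbb{Z}_2^2$-graded brackets: the sign rule reproduces the anticommutators $\{Q_i,Q_i^\dag\}=H$ within each degree and the commutators $[Q_i,Q_j]=[Q_i,Q_j^\dag]=[Q_i^\dag,Q_j^\dag]=0$ across the two degrees, so the Beckers--Debergh algebra is itself a $\mathbb{Z}_2^2$-Lie algebra.

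With this dictionary the candidate map writes itself. I would set
\[
Q_{01}\ \longmapsto\ \tfrac12\bigl(Q_1+Q_1^\dag\bigr),\qquad Q_{10}\ \longmapsto\ \tfrac12\bigl(Q_2+Q_2^\dag\bigr),\qquad H_{00}\ \longmapsto\ H,\qquad Z_{11}\ \longmapsto\ 0 .
\]
The verification is then a one-line computation in each degree. The image of $Q_{01}$ is manifestly Hermitian, and $\bigl(\tfrac12(Q_1+Q_1^\dag)\bigr)^2 = \tfrac14\{Q_1,Q_1^\dag\} = \tfrac14 H$ because $Q_1^2=(Q_1^\dag)^2=0$; this reproduces $Q_{01}^2=\tfrac14 H_{00}$, and likewise for $Q_{10}$. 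Commutativity of the two images is exactly the cross-sector vanishing $[Q_i,Q_j]=[Q_i,Q_j^\dag]=[Q_i^\dag,Q_j^\dag]=0$ for $i\neq j$, which yields $[Q_{10},Q_{01}]=0$, i.e.\ the image of $\tfrac12 Z_{11}$. The structural point I would emphasize is that the Beckers--Debergh algebra carries no sector of degree $(1,1)$, so there is simply nothing for $Z_{11}$ to pair with; this forces the correspondence into the family with $Z_{11}=0$ and explains the qualifier in the statement.

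I expect the main obstacle to be the tension between nilpotency and the square-root structure of the Hamiltonian, and I would confront it explicitly rather than conceal it. In \eqref{eq:BDalg} the charges are nilpotent, $Q_i^2=0$, whereas in \eqref{eqn:Z22SuperAlg} one has $Q_{01}^2=\tfrac14 H_{00}\neq 0$; since the associative envelope of \eqref{eqn:Z22SuperAlg} with $Z_{11}=0$ is commutative and reduced, it admits no nonzero nilpotents, so a strict algebra homomorphism reading \eqref{eq:BDalg} literally must either fold $Q_i$ and $Q_i^\dag$ onto a single Hermitian generator or kill the charges altogether. This is precisely why the morphism has to be taken in the graded-Lie sense and why it is \emph{non-canonical}: it depends on the splitting of each complex charge into its Hermitian and anti-Hermitian parts (equivalently, on a choice of Green-ansatz components), a freedom visible in the coefficients of the assignment above. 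I would close by checking that the construction is consistent with the Hermiticity conventions of Definition~\ref{def:ZQMSystem} and with the positivity theorem established above, so that no further compatibility conditions arise.
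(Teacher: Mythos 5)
Your proposal is correct and follows essentially the same route as the paper: you define $Q_{01} := \tfrac12(Q_1+Q_1^\dagger)$, $Q_{10} := \tfrac12(Q_2+Q_2^\dagger)$, $H_{00} := H$, $Z_{11} := 0$, and verify the relations of \eqref{eqn:Z22SuperAlg} from the nilpotency, anticommutator, and cross-commutator relations of \eqref{eq:BDalg}, exactly as the paper does. Your additional remarks (the $\mathbb{Z}_2^2$-grading of the Beckers--Debergh algebra, the absence of a $(1,1)$ sector forcing $Z_{11}=0$, and the source of non-canonicality) are consistent embellishments rather than a different argument.
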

\begin{proof}
We define $Q_{01} :=  \frac{1}{2}(Q_1 + Q_1^\dag)$, $Q_{10} :=  \frac{1}{2}(Q_2 + Q_2^\dag)$ and $H_{00} := H$. Note that this is a choice and we could equally have made the other obvious choice here with  $Q_{01}$ and $Q_{10}$.  Direct calculation shows that we recover the algebra \eqref{eqn:Z22SuperAlg} with $Z=0$:
\begin{align*}
& [Q_{01} ,Q_{01}] = \frac{1}{2}(Q_1 + Q_1^\dag) (Q_1 + Q_1^\dag) = \frac{1}{2}\{Q_1, Q_1^\dag \} = \frac{1}{2}H_{00},\\
& [Q_{10} ,Q_{10}] = \frac{1}{2}(Q_2 + Q_2^\dag) (Q_2 + Q_2^\dag) = \frac{1}{2}\{Q_2, Q_2^\dag \} = \frac{1}{2}H_{00},\\
&[Q_{10}, Q_{01}] = \frac{1}{4} \big([Q_2, Q_1] + [Q_2,Q_1^\dag] + [Q_2^\dag, Q_1] + [Q_2^\dag, Q_1^\dag]  \big) = \frac{1}{2}Z_{11}=0,\\
& [Q_{01}, H_{00}] = \frac{1}{2}\big( [Q_1, H] + [Q_1^\dag ,H]\big) =0,\\
& [Q_{10}, H_{00}] = \frac{1}{2}\big( [Q_2, H] + [Q_2^\dag ,H]\big) =0,\\
\end{align*}
where we have directly used the algebra \eqref{eq:BDalg} and the fact that $H$ is Hermitian.
\end{proof}
We stress the point that this morphism is not invertible - there is no way to uniquely decompose a Hermitian operator into the sum of a non-Hermitian operator and its Hermitian conjugate. Also we draw attention to the fact that the central charge has to vanish in order to construct the above morphism. Thus, clearly, we do not have an isomorphism between the Beckers--Debergh algebra and the $\mathbb{Z}_2^2$-supertranslation algebra. In other words, although there are clear similarities between the $\mathbb{Z}_2^2$-supertranslation algebra and parastatistics/parasupersymmetry, the two concepts are not the same.
\subsection{`Higher' Pauli matrices}
The $\mathbb{Z}_2^2$-degrees of freedom for a given quantum mechanical are given by vectors in $\mathbb{C}^4$ and not just $\mathbb{C}^2$  (i.e., we have more than just spin  ``up'' and ``down''). Thus, we cannot directly use the Pauli matrices. Instead, we have to use the so-called ``Sigma'' and ``alpha'' matrices which are built by placing the Pauli matrices (and the identity matrix) on the diagonal and anti-diagonal, respectively, i.e., we use the direct sum and the skew sum of matrices. That is, we define $\Sigma_i := \sigma_i \oplus \sigma_i$ and $\alpha_i := \sigma_i \ominus \sigma_i$ ($i = 0,1,2,3$) where $\sigma_i$ are the  Pauli matrices $(\sigma_0 := \textnormal{id}_{2\times 2})$. Explicitly we have the following matrices:
\[ \Sigma_0 = \left( \begin{array}{rrrr}
1 & 0 & 0 & 0 \\
0 & 1 & 0 & 0  \\
0 & 0 & 1 & 0\\
0 & 0 & 0 & 1\\
\end{array} \right), \hspace{25pt}
 \Sigma_1 = \left( \begin{array}{rrrr}
0 & 1 & 0 & 0 \\
1 & 0 & 0 & 0  \\
0 & 0 & 0  & 1\\
0 & 0 & 1  & 0\\
\end{array} \right), \] \\
\[
\Sigma_2 = \left( \begin{array}{rrrr}
0 & - \mathrm{i} & 0 & 0 \\
\mathrm{i} & 0 & 0 & 0  \\
0 & 0 & 0  & -  \mathrm{i}\\
0 & 0 & \mathrm{i}  & 0\\
\end{array} \right), \hspace{25pt}
 \Sigma_3 = \left( \begin{array}{rrrr}
1 & 0 & 0 & 0 \\
0 & -1 & 0 & 0  \\
0 & 0 & 1 & 0\\
0 & 0 & 0 & -1\\
\end{array} \right),\]
and
\[ \alpha_0 = \left( \begin{array}{rrrr}
0 & 0 & 1 & 0 \\
0 & 0 & 0 & 1  \\
1 & 0 & 0 & 0\\
0 & 1 & 0 & 0\\
\end{array} \right), \hspace{25pt}
 \alpha_1 = \left( \begin{array}{rrrr}
0 & 0 & 0 & 1 \\
0 & 0 & 1 & 0  \\
0 & 1 & 0  & 0\\
1 & 0 & 0  & 0\\
\end{array} \right), \] \\
\[
\alpha_2 = \left( \begin{array}{rrrr}
0 & 0 & 0 & - \mathrm{i} \\
0 & 0 & \mathrm{i} & 0  \\
0 & - \mathrm{i} & 0  & 0\\
\mathrm{i} & 0 & 0  & 0\\
\end{array} \right), \hspace{25pt}
 \alpha_3 = \left( \begin{array}{rrrr}
0 & 0 & 1 & 0 \\
0 & 0 & 0 & -1  \\
1 & 0 & 0 & 0\\
0 &-1 & 0 & 0\\
\end{array} \right).\]

\bigskip

The algebraic properties of these matrices follow from that of the Pauli matrices (see for example \cite[pages 209--212]{Arf/Web}).  In particular, if we define $[A,B]_\mp = AB \mp BA$ and follow the definitions though, we obtain
\begin{align}
& [\Sigma_i, \Sigma_j]_\mp = [\alpha_i, \alpha_j]_\mp = [\sigma_i, \sigma_j]_\mp \oplus [\sigma_i, \sigma_j]_\mp,\\
& [\Sigma_i, \alpha_j]_\mp = [\sigma_i, \sigma_j]_\mp \ominus [\sigma_i, \sigma_j]_\mp\,.
\end{align}
Thus, we see that the (anti-)commutators of the  $\Sigma$ and $\alpha$ matrices are completely determined by the (anti-)commutators of the Pauli matrices.\par
We consider the $\Sigma$ and $\alpha$ matrices to carry $\mathbb{Z}_2^2$-degree as defined by their action on $\mathbb{C}^4$. We use the following decomposition:
\begin{equation}
\mathbb{C}^4 = \mathbb{C}_{00} \oplus \mathbb{C}_{01} \oplus \mathbb{C}_{11} \oplus \mathbb{C}_{10}.
\end{equation}
We can then assign the following degrees:
\begin{align}\label{eqn:DegMat}
& \mathrm{deg}(\Sigma_0) = (0,0), & \mathrm{deg}(\Sigma_1) = (0,1), & & \mathrm{deg}(\Sigma_2) = (0,1),&&  \mathrm{deg}(\Sigma_3) = (0,0),\\
\nonumber & \mathrm{deg}(\alpha_0) = (1,1), & \mathrm{deg}(\alpha_1) = (1,0), & & \mathrm{deg}(\alpha_2) = (1,0),&&  \mathrm{deg}(\alpha_3) = (1,1).
\end{align}
This assignment of the $\mathbb{Z}_2^n$-degree will be essential in how we define the charges in our  model.\\

\noindent \textbf{Warning.} From now on all commutators will be $\mathbb{Z}_2^2$-graded commutators unless otherwise stated (see \eqref{eqn:Z22Com}).
\begin{proposition}
With the above $\mathbb{Z}_2^2$-grading (see \eqref{eqn:DegMat}), the vector space (over $\mathbb{C}$) spanned by the $\Sigma$ and $\alpha$ matrices forms a $\mathbb{Z}_2^2$-Lie algebra with respect to the commutator \eqref{eqn:Z22Com}.
\end{proposition}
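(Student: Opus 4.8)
The plan is to verify the three defining conditions of a $\mathbb{Z}_2^2$-Lie algebra from Subsection~2.1 for the eight-dimensional space $V := \mathrm{span}_{\mathbb{C}}\{\Sigma_0,\dots,\Sigma_3,\alpha_0,\dots,\alpha_3\}$ equipped with the graded commutator \eqref{eqn:Z22Com}. By bilinearity it suffices to test everything on the homogeneous generators, and the crucial simplification is that for any ordered pair $X,Y$ of generators the scalar product $\langle \mathrm{deg}(X),\mathrm{deg}(Y)\rangle$ is either $0$ or $1$. Hence \eqref{eqn:Z22Com} collapses to the ordinary commutator $[X,Y]_{-}$ in the first case and to the ordinary anticommutator $[X,Y]_{+}$ in the second, so that every graded bracket is one of the expressions $[\Sigma_i,\Sigma_j]_{\mp}$, $[\alpha_i,\alpha_j]_{\mp}$, $[\Sigma_i,\alpha_j]_{\mp}$ already evaluated via the reduction formulas above in terms of the Pauli (anti-)commutators $[\sigma_i,\sigma_j]_{\mp}$.

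First I would dispose of closure together with condition (1), the additivity of degrees. Closure is immediate: since $\{\sigma_0,\sigma_1,\sigma_2,\sigma_3\}$ spans $\mathrm{Mat}_{2\times 2}(\mathbb{C})$ it is stable under both the commutator and the anticommutator, so by the reduction formulas each graded bracket of generators again lies in $V$. For the degree, I would observe that each generator is a genuinely homogeneous operator on the $\mathbb{Z}_2^2$-graded space $\mathbb{C}^4=\mathbb{C}_{00}\oplus\mathbb{C}_{01}\oplus\mathbb{C}_{11}\oplus\mathbb{C}_{10}$, carrying exactly the degree recorded in \eqref{eqn:DegMat}; this is read off the explicit matrices by checking which homogeneous subspaces each one interchanges. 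Because $\mathrm{End}(\mathbb{C}^4)$ inherits an algebra grading for which the product of homogeneous operators of degrees $d_1,d_2$ is homogeneous of degree $d_1+d_2$, every nonzero graded commutator of generators is homogeneous of the degree demanded by condition (1).

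Conditions (2) and (3) then come essentially for free, since $V$ sits inside the $\mathbb{Z}_2^2$-graded associative algebra $\mathrm{End}(\mathbb{C}^4)$ and the graded commutator of any such algebra already defines a $\mathbb{Z}_2^2$-Lie structure. Graded antisymmetry is a purely formal identity: writing $s=\langle\mathrm{deg}(X),\mathrm{deg}(Y)\rangle$ and using the symmetry of $\langle -,-\rangle$ together with $(-1)^{2s}=1$, one computes straight from \eqref{eqn:Z22Com} that $[X,Y]=-(-1)^{s}[Y,X]$ for any associative product, with no input from the specific matrices. The Loday--Leibniz Jacobi identity (3) follows in the same way by substituting \eqref{eqn:Z22Com} and expanding with associativity; because the commutation factor $\beta(a,b)=(-1)^{\langle a,b\rangle}$ is a symmetric bicharacter on $\mathbb{Z}_2^2$, the cross terms cancel exactly as in the ungraded case. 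Restricting these two identities to the closed subspace $V$ completes the verification.

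The only genuine work, and the natural place for a slip, is the degree bookkeeping in condition (1): one must confirm that the graded commutator selects the commutator precisely when the Pauli-level output of the correct degree arises from $[\sigma_i,\sigma_j]_{-}$ and the anticommutator precisely when it arises from $[\sigma_i,\sigma_j]_{+}$. Rather than run an $8\times 8$ table by hand, I expect the cleanest uniform route is the factorization $\Sigma_i=\sigma_0\otimes\sigma_i$, $\alpha_i=\sigma_1\otimes\sigma_i$ on $\mathbb{C}^4=\mathbb{C}^2\otimes\mathbb{C}^2$, where the outer factor is graded by the subgroup $\{(0,0),(1,1)\}$ and the inner factor by $\{(0,0),(0,1)\}$; this exhibits each generator's degree as a sum of the outer and inner contributions and reduces the whole proposition to the homogeneity of the eight generators established above.
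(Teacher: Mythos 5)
Your proposal is correct, and its skeleton coincides with the paper's own (very terse) proof: closure is inherited from the Pauli matrices being stable under both the ordinary commutator and anticommutator, and the Jacobi identity holds formally because the bracket \eqref{eqn:Z22Com} is the graded commutator of an associative algebra. Where you go beyond the paper is in completeness: the paper records only linear independence, closure, and the remark that Jacobi is ``obvious,'' and never explicitly verifies degree additivity (condition (1)) or graded antisymmetry (condition (2)); you check both, by observing that each generator is a homogeneous operator on $\mathbb{C}^4 = \mathbb{C}_{00}\oplus\mathbb{C}_{01}\oplus\mathbb{C}_{11}\oplus\mathbb{C}_{10}$ so that $\mathrm{End}(\mathbb{C}^4)$ carries the induced $\mathbb{Z}_2^2$-algebra grading, and by noting that $(-1)^{\langle -,-\rangle}$ is a symmetric bicharacter, which is precisely what makes the formal antisymmetry/Jacobi computation close. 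Your final suggestion --- the factorization $\Sigma_i = \sigma_0\otimes\sigma_i$, $\alpha_i = \sigma_1\otimes\sigma_i$ on $\mathbb{C}^2\otimes\mathbb{C}^2$, with the outer factor graded by $\{(0,0),(1,1)\}$ and the inner one by $\{(0,0),(0,1)\}$ --- does not appear in the paper at all; it is a genuinely cleaner bookkeeping device, since it reproduces the degree assignment \eqref{eqn:DegMat} at a glance and simultaneously explains the $\oplus$/$\ominus$ reduction formulas. One cosmetic slip: the scalar product of two degrees is not always $0$ or $1$; for instance $\langle \mathrm{deg}(\alpha_0), \mathrm{deg}(\alpha_3)\rangle = 2$. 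Only its parity enters the sign in \eqref{eqn:Z22Com}, so your dichotomy (even $\Rightarrow$ commutator, odd $\Rightarrow$ anticommutator) and everything built on it survives unchanged, but the statement should be phrased modulo $2$.
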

\begin{proof}
First, we observe that the $\Sigma$ and $\alpha$ matrices are linearly independent. Secondly, it is clear that we have closure under the ($\mathbb{Z}_2^2$-graded) commutator. This follows as all the properties of the $\Sigma$ and $\alpha$ matrices are inherited from the Pauli matrices and that the Pauli matrices are closed under both (non-graded) (anti-)commutators. Thirdly,  the  ($\mathbb{Z}_2^2$-graded) Jacobi identity is obviously satisfied as we are dealing with the commutator.
\end{proof}
Explicitly, the non-vanishing brackets are:
\begin{align}
& [\Sigma_1 ,\Sigma_1] = 2 \: \Sigma_0, && [\Sigma_1 ,\Sigma_3] = - 2 \mathrm{i} \: \Sigma_2, && [\Sigma_2 ,\Sigma_2] = 2 \: \Sigma_0, && [\Sigma_2 ,\Sigma_3] = 2 \mathrm{i} \: \Sigma_1,\\
\nonumber
& [\alpha_0 ,\alpha_1] = 2 \: \Sigma_1, && [\alpha_0 ,\alpha_2] =  2  \: \Sigma_2, && [\alpha_1 ,\alpha_1] = 2 \: \Sigma_0, && [\alpha_2 ,\alpha_2] = 2 \: \Sigma_0,\\
\nonumber
& [\Sigma_1 ,\alpha_0] = 2 \: \alpha_1, && [\Sigma_1 ,\alpha_2] =  2 \mathrm{i} \: \alpha_3, && [\Sigma_2 ,\alpha_0] = 2 \: \alpha_2, && [\Sigma_2 ,\alpha_1] = -2 \mathrm{i} \: \alpha_3.
\end{align}
%
\section{\scshape A double-graded supersymmetric quantum mechanical system}\label{sec:ZSUSYQM}

\subsection{The Specific Model}
 In order to build a specific double-graded supersymmetric quantum mechanical system (see Definition \ref{def:ZQMSystem}) we propose the following operators acting on $\mathcal{H} :=  L^2(\mathbb{R}) \otimes \mathbb{C}^4$:
\begin{align}\label{eqn:Model}
& Q_{01} = \frac{1}{2} \left( \frac{p}{\sqrt{2 m}} \otimes \Sigma_1 + W \otimes \Sigma_2\right), \;\;\; Q_{10} = \frac{1}{2} \left( \frac{p}{\sqrt{2 m}} \otimes \alpha_2  -  W \otimes \alpha_1\right),\nonumber\\
\nonumber & H_{00} = \left(\frac{p^2}{2m} + W^2 \right) \otimes \Sigma_0 + \frac{\hbar}{ \sqrt{2 m}} W^\prime \otimes \Sigma_3,\\&  Z_{11} =  - \mathrm{i} \left( \left(\frac{p^2}{2m} + W^2 \right) \otimes \alpha_3 + \frac{\hbar}{ \sqrt{2 m}} W^\prime \otimes \alpha_0 \right).
\end{align}
Here $W := W(x) \in C^\infty(\mathbb{R})$ and $W^\prime :=  \frac{\mathrm{d}W}{\mathrm{d}x}$.  As standard $p = - \mathrm{i} \:  \frac{\mathrm{d}}{\mathrm{d}x}$, i.e., we are using the  Schr\"{o}dinger representation. We will further impose the requirement that $|W| \rightarrow \infty$ as $x \rightarrow \pm \infty$. In this way we have only bound states, a discrete spectrum and critically, all the wave functions belong to the Hilbert space $\mathcal{H}$, i.e., we do not have plane wave solutions. Furthermore, for simplicity, we will always work with the full real line and not the half-line or some interval. This will avoid us having to discuss boundary conditions for different subspaces of the Hilbert space.
\begin{remark}
One can relax the smoothness condition on $W$ for just $C^1$, however for convenience we will insist on smoothness.
\end{remark}
\begin{theorem}
The operators defined above in \eqref{eqn:Model} satisfy the $\mathbb{Z}_2^2$-supertranslation algebra \eqref{eqn:Z22SuperAlg}. In other words, we have a $\mathbb{Z}_2^2$-SUSY QM system (see Definition \ref{def:ZQMSystem}) defined by the Hilbert space  $\mathcal{H} :=  L^2(\mathbb{R}) \otimes \mathbb{C}^4$  and the above operators \eqref{eqn:Model}.
\end{theorem}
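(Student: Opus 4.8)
The plan is to exploit the tensor-product structure of the operators in \eqref{eqn:Model}: each is a sum of terms $A \otimes M$ with $A \in \{p,\, p^2,\, W,\, W^2,\, W'\}$ acting on $L^2(\mathbb{R})$ and $M$ one of the $\Sigma$- or $\alpha$-matrices, so that products factor as $(A\otimes M)(B\otimes N) = (AB)\otimes(MN)$ and the whole computation separates into a matrix part and an $L^2(\mathbb{R})$ part. On the matrix side I would reduce every product to Pauli-matrix products through the block decompositions $\Sigma_i = \sigma_i \oplus \sigma_i$ and $\alpha_i = \sigma_i \ominus \sigma_i$; the inputs needed are $\Sigma_1^2 = \Sigma_2^2 = \alpha_1^2 = \alpha_2^2 = \Sigma_0$, the products $\Sigma_1\Sigma_2 = \mathrm{i}\Sigma_3$ and $\alpha_1\alpha_2 = \mathrm{i}\Sigma_3$, and the mixed products $\Sigma_i\alpha_j,\ \alpha_i\Sigma_j$, which all lie in the span of $\{\alpha_0, \alpha_3\}$. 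On the $L^2(\mathbb{R})$ side the only non-trivial ingredient is the single canonical commutator $[p,W]$, proportional to $W'$, which is precisely what supplies the $W'$-terms in $H_{00}$ and $Z_{11}$.

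With these ingredients I would verify the three defining relations of \eqref{eqn:Z22SuperAlg} by direct expansion, keeping careful track of the sign factors in \eqref{eqn:Z22Com}. Because $\langle(0,1),(0,1)\rangle = \langle(1,0),(1,0)\rangle = 1$, the brackets $[Q_{01},Q_{01}]$ and $[Q_{10},Q_{10}]$ equal $2Q_{01}^2$ and $2Q_{10}^2$ respectively, while $\langle(1,0),(0,1)\rangle = 0$ makes $[Q_{10},Q_{01}]$ an ordinary commutator. In $2Q_{01}^2$ the diagonal terms give $(\frac{p^2}{2m}+W^2)\otimes\Sigma_0$ and the cross terms assemble into the $[p,W]\otimes\Sigma_3$ piece, reproducing $\tfrac12 H_{00}$; the computation of $2Q_{10}^2$ is identical after the replacement $(\Sigma_1,\Sigma_2)\mapsto(\alpha_2,-\alpha_1)$. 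For the cross bracket the mixed products route the diagonal terms onto $\alpha_3$ and the $[p,W]$ term onto $\alpha_0$, and the two pieces combine with a common factor $-\mathrm{i}$ into exactly $\tfrac12 Z_{11}$.

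It then remains to check that every other bracket vanishes, i.e.\ that $H_{00}$ and $Z_{11}$ are central. Rather than computing each of $[Q_{01},H_{00}]$, $[Q_{10},H_{00}]$, $[Q_{01},Z_{11}]$, $[Q_{10},Z_{11}]$ and $[H_{00},Z_{11}]$ directly, I would invoke the graded Jacobi identity, which the $\mathbb{Z}_2^2$-graded commutator of operators automatically satisfies. Since $H_{00}$ and $Z_{11}$ are themselves brackets of the supercharges, their centrality is forced: for instance $[Q_{01},H_{00}] \propto [Q_{01},[Q_{01},Q_{01}]]$, which vanishes because the Jacobi identity with $a=b=c=Q_{01}$ reads $X = (-1)^{\langle(0,1),(0,1)\rangle}X = -X$; the remaining relations follow by the same mechanism, together with the antisymmetry rule, once the three basic brackets are established.

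The main obstacle I anticipate is bookkeeping rather than anything conceptual: one must simultaneously track the $(-1)^{\langle\cdot,\cdot\rangle}$ factors that decide whether each bracket is a commutator or an anticommutator, and the non-commutativity of $p$ and $W$, which survives only through $[p,W]\propto W'$. The most delicate point is the cross bracket $[Q_{10},Q_{01}]$ and the overall $-\mathrm{i}$ in $Z_{11}$, since there the skew-sum structure of the $\alpha$-matrices and the single derivative of the superpotential must conspire to give both the $\alpha_3$- and the $\alpha_0$-terms the same phase.
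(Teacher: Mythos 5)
Your proposal is correct, and your treatment of the three basic brackets $[Q_{01},Q_{01}]$, $[Q_{10},Q_{10}]$, $[Q_{10},Q_{01}]$ is exactly the paper's: tensor factorization, reduction to Pauli products ($\Sigma_1\Sigma_2=\mathrm{i}\Sigma_3$, $\alpha_1\alpha_2 = \mathrm{i}\Sigma_3$, the mixed products landing on $\alpha_0,\alpha_3$), with the single commutator $[p,W]\propto W'$ producing the $W'$-terms. Where you genuinely diverge is the centrality of $H_{00}$ and $Z_{11}$: the paper invokes the Jacobi identity only for $[Q,H_{00}]=0$ and then verifies $[Q_{01},Z_{11}]=0$ and $[Q_{10},Z_{11}]=0$ by lengthy explicit expansion (roughly half of its proof), reducing them to evaluations of $[p,W^2]$ and $[W,p^2]$. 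Your route makes all the remaining brackets purely formal consequences of the three established ones, and it does work: the direct computations give $Q_{01}^2=Q_{10}^2=\tfrac14 H_{00}$ as operators, whence $[Q_{01},H_{00}]=4[Q_{01},Q_{01}^2]=0$ and $[Q_{10},H_{00}]=4[Q_{10},Q_{10}^2]=0$ outright, and then, writing $Z_{11}=2(Q_{10}Q_{01}-Q_{01}Q_{10})$,
\begin{equation*}
\tfrac12\bigl(Q_{01}Z_{11}+Z_{11}Q_{01}\bigr)=Q_{10}Q_{01}^{2}-Q_{01}^{2}Q_{10}=Q_{10}Q_{10}^{2}-Q_{10}^{2}Q_{10}=0,
\end{equation*}
and symmetrically for $[Q_{10},Z_{11}]$; $[H_{00},Z_{11}]=0$ follows since $H_{00}$ then commutes with every word in the supercharges. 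This buys a shorter proof that isolates the only analytic input, whereas the paper's brute-force check keeps everything concrete. Two cautions. First, your argument hinges essentially on the \emph{equality} of the two self-brackets: it is the operator identity $Q_{01}^2=Q_{10}^2$, not the Jacobi identity alone, that kills the term $[Q_{10},[Q_{01},Q_{01}]]$ arising in the expansion of $[Q_{01},Z_{11}]$ --- an abstract algebra with $[Q_{01},Q_{01}]$ and $[Q_{10},Q_{10}]$ central but unequal would not force $Z_{11}$ to be central --- so this dependence should be stated explicitly. Second, your one-line rendering of the self-bracket trick (``Jacobi with $a=b=c=Q_{01}$ reads $X=-X$'') is slightly garbled: with the Loday--Leibniz form of the Jacobi identity plus antisymmetry one gets $3X=0$, or more simply $[Q,[Q,Q]]=2[Q,Q^2]=0$ by associativity. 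Neither point affects the validity of your argument.
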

\begin{proof}
We prove the theorem via direct computations.
\begin{itemize}
\item First we consider the self-commutator of $Q_{01}$.
$$ [Q_{01}, Q_{01}]  = \frac{1}{2} \left(\frac{p^2}{2 m} \otimes \Sigma_1^2 + W^2 \otimes \Sigma_2^2 \right) + \frac{1}{2} \left( \frac{p}{\sqrt{2m}} W \otimes \Sigma_1 \Sigma_2  + W  \frac{p}{\sqrt{2m}}  \otimes \Sigma_2 \Sigma_1 \right).$$
Now we use $\Sigma_1^2 = \Sigma_2^2 = \Sigma_0$ and $[\Sigma_1, \Sigma_2] = 0$ (being careful with the $\mathbb{Z}_2^2$-degree) to obtain
$$ [Q_{01}, Q_{01}] =  \frac{1}{2} \left(\frac{p^2}{2 m} + W^2 \right) \otimes \Sigma_0 + \frac{1}{2} \left( \frac{- \mathrm{i} \hbar}{ \sqrt{2m}} W^\prime \otimes \Sigma_1 \Sigma_2 \right).$$
Next we use $- \mathrm{i} \Sigma_1 \Sigma_2 = \Sigma_3$ and we arrive at
$$ [Q_{01}, Q_{01}] =  \frac{1}{2} \left(\frac{p^2}{2 m} + W^2 \right) \otimes \Sigma_0 + \frac{1}{2} \left( \frac{\hbar}{ \sqrt{2m}} W^\prime \otimes \Sigma_3 \right),$$
as required.
\item Next we consider the self-commutator of $Q_{10}$.
 $$ [Q_{10}, Q_{10}]  = \frac{1}{2} \left(\frac{p^2}{2 m} \otimes \alpha_2^2 + W^2 \otimes \alpha_1^2 \right) - \frac{1}{4} \left( \frac{p}{\sqrt{2m}} W \otimes \alpha_2 \alpha_1  + W  \frac{p}{\sqrt{2m}}  \otimes \alpha_1 \alpha_2 \right).$$
Now we use $\alpha_1^2 = \alpha_2^2 = \Sigma_0$ and $[\alpha_1, \alpha_2] = 0$  to obtain
$$ [Q_{10}, Q_{10}] =  \frac{1}{2} \left(\frac{p^2}{2 m} + W^2 \right) \otimes \Sigma_0 + \frac{1}{2} \left( \frac{ \mathrm{i} \hbar}{ \sqrt{2m}} W^\prime \otimes \alpha_2 \alpha_1 \right)\,.$$
Next we use $ \mathrm{i} \alpha_2 \alpha_1 = \Sigma_3$ and we arrive at the required expression.
\item Next we need to consider the `mixed' commutator of the $\mathbb{Z}_2^2$-supercharges.
\begin{align}[Q_{10},Q_{01}] &= \frac{1}{4} \left(\frac{p^2}{2m} \otimes [\alpha_2, \Sigma_1] - W^2 \otimes [\alpha_1, \Sigma_2] + \frac{p}{\sqrt{2m}} W \otimes \big(\alpha_2 \Sigma_2  + \Sigma_1 \alpha_1 \big)\right.\nonumber\\ &-\left. W  \frac{p}{\sqrt{2m}} \otimes\big(\alpha_1 \Sigma_1  + \Sigma_2 \alpha_2 \big)  \right)\, .\end{align}
Now we use $$[\alpha_2 ,\Sigma_1] =  - 2\mathrm{i} \alpha_3,  [\alpha_1 ,\Sigma_2] = 2\mathrm{i} \alpha_3, (\alpha_2 \Sigma_2  + \Sigma_1 \alpha_1) = (\alpha_1 \Sigma_1  + \Sigma_2 \alpha_2) = 2 \alpha_0 $$ and obtain
$$ [Q_{10}, Q_{01}] =  - \frac{ \mathrm{i}}{2} \left(   \left(\frac{p^2}{2 m} + W^2 \right) \otimes \alpha_3 +  \frac{\hbar}{ \sqrt{2m}} W^\prime \otimes \alpha_0  \right)\,,$$
as required.
\item It is clear that $[Q,H] =0$ follows from the Jacobi identity for the commutators. Thus, the only thing that now requires checking is if that $Z_{11}$ is indeed central.  From the $\mathbb{Z}_2^2$-supersymmetry algebra and the Jacobi identity, we need only show that $[Q_{01}, Z_{11}] =0$ and  $[Q_{10}, Z_{11}] =0$. For notational ease, we define $H_0 := \frac{p^2}{2m} + W^2$.
\begin{align*}
[Q_{01}, Z_{11}]  &= - \frac{\mathrm{i}}{2} \left(  \frac{p}{\sqrt{2m}}H_0 \otimes \Sigma_1 \alpha_3 + \frac{\hbar}{2m} pW^\prime  \otimes \Sigma_1 \alpha_0 + W H_0 \otimes \Sigma_2 \alpha_3  + \frac{\hbar}{\sqrt{2m}} W W^\prime \otimes \Sigma_2 \alpha_0 \right.\\
 & \left. +  H_0 \frac{p}{\sqrt{2m}} \otimes \alpha_3 \Sigma_2  + H_0 W \otimes \alpha_3 \Sigma_2 + \frac{\hbar }{2m} W^\prime p \otimes \alpha_0 \Sigma_1  + \frac{\hbar}{\sqrt{2m}} W^\prime W \otimes \alpha_0 \Sigma_2\right) \\
 &=  - \frac{\mathrm{i}}{2}\left( \frac{1}{\sqrt{2m}}[p,W^2] \otimes \Sigma_1 \alpha_3  + \frac{2\hbar}{\sqrt{2m}} W W^\prime \otimes \Sigma_2 \alpha_0\right.\\ &+ \left. \frac{1}{2m} [W , p^2] \otimes \Sigma_2 \alpha_3 + \frac{\hbar}{2m} (p W^\prime + W^\prime p) \otimes \Sigma_1 \alpha_0\right) \\
  &= - \frac{\mathrm{i}}{2} \left( \left(\frac{-\mathrm{i}}{\sqrt{2m}}[p, W^2] + \frac{2 \hbar}{\sqrt{2m}} W W^\prime\right)\otimes \alpha_2\right.\nonumber\\&   + \left(\frac{\mathrm{i}}{2m} [W,p^2] + \frac{\hbar}{2m}\big( pW^\prime + W^\prime p \big) \otimes \alpha_1\right) .
\end{align*}
Evaluation of the commutators shows that the above expression vanishes.
\bigskip

A similar computation  gives
\begin{align}[Q_{10}, Z_{11}]  = - \frac{\mathrm{i}}{2}  \left(\frac{\mathrm{i}}{\sqrt{2m}} [p,W^2] - \frac{2\hbar }{\sqrt{2m}}W W^\prime\right) \otimes \Sigma_1   \nonumber \\ -\frac{\mathrm{i}}{2}\left( \frac{\hbar }{2m}\big(pW^\prime + W^\prime p \big) + \frac{\mathrm{i}}{2m} [W,p^2]  \right)\otimes \Sigma_2 \,.\end{align}
Then once again by evaluating the commutators this expression vanishes.
 \end{itemize}
\end{proof}
\noindent We will make the canonical identification between a matrix  $O$ and $(1\otimes O)$ as needed.
\begin{proposition}
The Hamiltonian and central charge are related by
$$H_{00} \: \alpha_3 = {-} \mathrm{i} \: Z_{11}.$$
\end{proposition}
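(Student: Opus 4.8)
The plan is to read the claimed identity as an ordinary operator equation on $\mathcal{H} = L^2(\mathbb{R})\otimes\mathbb{C}^4$ rather than as a graded commutator, so that the only bookkeeping involved is matrix multiplication in the $\mathbb{C}^4$ factor together with the scalar (differential-operator) coefficients. Invoking the canonical identification of $\alpha_3$ with $1\otimes\alpha_3$, I would write $H_{00}\,\alpha_3 = H_{00}\,(1\otimes\alpha_3)$ and expand $H_{00}$ as given in \eqref{eqn:Model}. The point that makes the whole computation short is that $H_{00}$ and $Z_{11}$ are assembled from exactly the same two coefficients, namely $H_0 := \frac{p^2}{2m}+W^2$ and $\frac{\hbar}{\sqrt{2m}}W^\prime$, each tensored with a $4\times 4$ matrix. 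Since $1\otimes\alpha_3$ acts only on the matrix factor, the product distributes as $(A\otimes B)(1\otimes\alpha_3)=A\otimes(B\,\alpha_3)$, and the entire statement collapses to two matrix identities in the $\Sigma$/$\alpha$ algebra.

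The key step is therefore to evaluate the two products $\Sigma_0\,\alpha_3$ and $\Sigma_3\,\alpha_3$. The first is immediate, since $\Sigma_0=\mathrm{id}_{4\times 4}$ gives $\Sigma_0\,\alpha_3=\alpha_3$. The second is the crux: because $\Sigma_3=\mathrm{diag}(1,-1,1,-1)$ merely rescales the rows of $\alpha_3$ by these signs, a direct multiplication yields $\Sigma_3\,\alpha_3=\alpha_0$. Substituting both back gives
\[
H_{00}\,\alpha_3 = H_0\otimes\alpha_3 + \tfrac{\hbar}{\sqrt{2m}}\,W^\prime\otimes\alpha_0,
\]
which is precisely the bracketed combination appearing inside the definition of $Z_{11}$ in \eqref{eqn:Model}. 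Reading off the prefactor then delivers the relation between $H_{00}\,\alpha_3$ and $Z_{11}$.

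I expect the sole genuine pitfall to be sign and phase bookkeeping. One must track carefully the minus signs introduced by the diagonal entries of $\Sigma_3$ when forming $\Sigma_3\,\alpha_3$, and, more delicately, the overall factor of $-\mathrm{i}$ carried by $Z_{11}$, since comparing the two operators effectively brings in $(-\mathrm{i})$ squared. Getting this factor right is exactly what fixes the sign in front of $Z_{11}$, so the verification of the single matrix identity $\Sigma_3\,\alpha_3=\alpha_0$ together with the $-\mathrm{i}$ tracking is the whole content of the proof. Reassuringly, no graded-commutator subtlety intervenes: the statement is an ordinary operator product, so the $\mathbb{Z}_2^2$-sign rule of \eqref{eqn:Z22Com} plays no role here and the matrix multiplications are the unadorned ones inherited from the Pauli algebra.
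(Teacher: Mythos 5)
Your route is the same as the paper's: the entire content is the pair of matrix identities $\Sigma_0\,\alpha_3=\alpha_3$ and $\Sigma_3\,\alpha_3=\alpha_0$, which is exactly what the paper's one-line proof cites, and your reduction via the canonical identification and $(A\otimes B)(1\otimes\alpha_3)=A\otimes(B\,\alpha_3)$, together with the row-rescaling evaluation of $\Sigma_3\,\alpha_3$, is correct. Up to your displayed equation $H_{00}\,\alpha_3 = H_0\otimes\alpha_3+\frac{\hbar}{\sqrt{2m}}W^\prime\otimes\alpha_0$ (with $H_0=\frac{p^2}{2m}+W^2$) there is nothing to object to, and you are also right that no graded-commutator sign rule enters.

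The gap is precisely the step you wave at, ``reading off the prefactor'', because carried out it yields the opposite sign to the printed statement. Set $B:=H_0\otimes\alpha_3+\frac{\hbar}{\sqrt{2m}}W^\prime\otimes\alpha_0$. You have shown $H_{00}\,\alpha_3=B$, and \eqref{eqn:Model} defines $Z_{11}=-\mathrm{i}\,B$; solving gives $H_{00}\,\alpha_3=\mathrm{i}\,Z_{11}$, equivalently $Z_{11}=-\mathrm{i}\,H_{00}\,\alpha_3$. The claim as printed, $H_{00}\,\alpha_3=-\mathrm{i}\,Z_{11}$, would require $-\mathrm{i}\,Z_{11}=(-\mathrm{i})^2B=-B$ to coincide with $B$, i.e. $H_{00}\,\alpha_3=0$, which is absurd. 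So your closing remark that tracking ``$(-\mathrm{i})$ squared'' is what ``fixes the sign in front of $Z_{11}$'' has the logic backwards: the factor $(-\mathrm{i})^2=-1$ is exactly what shows the identity cannot hold with the printed sign. The honest conclusion of your computation is $H_{00}\,\alpha_3=+\mathrm{i}\,Z_{11}$, and the printed proposition carries a sign slip (the paper's own proof, being equally terse, never confronts the prefactor either); this is confirmed by the ladder-operator matrices, where $H_{00}\,\alpha_3=\mathrm{diag}(H_+,H_-,H_+,H_-)\,\alpha_3$ is exactly the matrix multiplying $-\mathrm{i}$ in $Z_{11}$. As written, your proposal asserts that it delivers the printed relation, and that final step fails; you should either conclude with the corrected sign and flag the discrepancy, or the proof is wrong.
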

\begin{proof}
This follows as a simple consequence of the fact that $\Sigma_0 \alpha_3 = \alpha_3$  and $\Sigma_3 \alpha_3 = \alpha_0$.
\end{proof}
  \subsection{Ladder-like Operators}
  We define the two following operators
  \begin{align*}
  &A := \frac{ \mathrm{i}p}{\sqrt{2m}} +  W, && A^\dag := {-}\frac{\mathrm{i}p}{\sqrt{2m}}+ W\,,
  \end{align*}
which are identical to those found in Witten's model. We then write the following:
  \[ Q_{01} = \frac{\mathrm{i}}{2}\left(\left( \begin{array}{rr}
0 & {-}A  \\
A^\dag & 0  \\
\end{array} \right)  \oplus  \left( \begin{array}{rr}
0 & {-}A  \\
A^\dag & 0  \\
\end{array} \right) \right) =  \frac{\mathrm{i}}{2}\left( \begin{array}{rrrr}
0 & {-}A & 0 & 0 \\
A^\dag & 0 & 0 & 0  \\
0 & 0 & 0 & {-}A\\
0 & 0 & A^\dag & 0\\
\end{array} \right), \]
\[
 Q_{10} =  {-}\frac{\mathrm{1}}{2}\left(\left( \begin{array}{rr}
0 & A  \\
A^\dag & 0  \\
\end{array} \right)  \ominus \left( \begin{array}{rr}
0 & A  \\
A^\dag & 0  \\
\end{array} \right) \right) =   {-}\frac{\mathrm{1}}{2}\left( \begin{array}{rrrr}
0 & 0 & 0 & A \\
0 & 0 & A^\dag & 0  \\
0 & A & 0  & 0\\
A^\dag & 0 & 0  & 0\\
\end{array} \right). \]

\bigskip

  Let us define $H_+ :=  \frac{p^2}{2m} + W^2 + \frac{\hbar}{\sqrt{2m}} W^\prime$ and $H_- :=  \frac{p^2}{2m} + W^2 - \frac{\hbar}{\sqrt{2m}} W^\prime$. A quick computation shows that $H_+ = A A^\dag$ and $H_-= A^\dag A$. Using this, we can obtain:

   \[ H_{00} = \left( \begin{array}{rrrr}
H_+ & 0 & 0 & 0 \\
0 & H_- & 0 & 0  \\
0 & 0 & H_+ &0\\
0 & 0 & 0 & H_-\\
\end{array} \right), \hspace{25pt}
 Z_{11} = - \mathrm{i} \left( \begin{array}{rrrr}
0 & 0 & H_+ & 0 \\
0 & 0 &0 & - H_-  \\
H_+ & 0 & 0  & 0\\
0 & - H_- & 0  & 0\\
\end{array} \right). \]

\subsection{Parity Operators}
The Hilbert space we are considering has a natural decomposition
$$\mathcal{H} = \mathcal{H}_{00} \oplus \mathcal{H}_{01} \oplus \mathcal{H}_{11} \oplus \mathcal{H}_{10}\,.$$
As we have a bi-grading, naturally  have a pair of  parity operators (by definition these must be degree $(0,0)$ operators). Explicitly,
  \[ \mathbb{K}_1 = \sigma_0 \oplus (- \sigma_0) = \left( \begin{array}{rrrr}
1 & 0 & 0 & 0 \\
0 & 1 & 0 & 0  \\
0 & 0 & {-}1 &0\\
0 & 0 & 0 & {-}1\\
\end{array} \right), \hspace{25pt}
 \mathbb{K}_2 = \sigma_3 \oplus \sigma_3 = \left( \begin{array}{rrrr}
1 & 0 & 0 & 0 \\
0 & {-}1 & 0 & 0  \\
0 & 0 & {-1} &0\\
0 & 0 & 0 & 1\\
\end{array} \right). \]
Direct computation establishes the following:
\begin{proposition}
The pair of  parity operators $\mathbb{K}_1$ and $\mathbb{K}_2$ satisfy the following relations.
\begin{align}
&[\mathbb{K}_1, \mathbb{K}_2] =0,\\
& (\mathbb{K}_1)^2 = (\mathbb{K}_2)^2 = \Sigma_0.
\end{align}
Moreover,
\begin{equation}
[\mathbb{K}_1, H_{00}] = [\mathbb{K}_2, H_{00}] =0.
\end{equation}
\end{proposition}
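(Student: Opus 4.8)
The plan is to exploit the fact that, under the canonical identification of a matrix $O$ with $1 \otimes O$, all of $\mathbb{K}_1$, $\mathbb{K}_2$ and $H_{00}$ are simultaneously diagonal on the $\mathbb{C}^4$ factor, and that $\mathbb{K}_1, \mathbb{K}_2$ carry $\mathbb{Z}_2^2$-degree $(0,0)$. First I would record the observation that, since $\langle (0,0), \mathrm{deg}(Y) \rangle = 0$ for every homogeneous $Y$, the $\mathbb{Z}_2^2$-graded commutator \eqref{eqn:Z22Com} involving $\mathbb{K}_1$ or $\mathbb{K}_2$ collapses to the ordinary commutator $XY - YX$. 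Thus every bracket appearing in the statement may be evaluated as an honest matrix commutator and no sign subtleties arise; this is the one point worth stating carefully at the outset.

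Next I would dispatch the two purely algebraic identities by inspection. Writing $\mathbb{K}_1 = \mathrm{diag}(1,1,-1,-1)$ and $\mathbb{K}_2 = \mathrm{diag}(1,-1,-1,1)$, their product in either order is $\mathrm{diag}(1,-1,1,-1)$, so $[\mathbb{K}_1,\mathbb{K}_2]=0$ is immediate because diagonal matrices commute. Squaring each operator replaces every $\pm 1$ entry by $1$, giving $(\mathbb{K}_1)^2 = (\mathbb{K}_2)^2 = \mathrm{diag}(1,1,1,1) = \Sigma_0$. These are one-line checks.

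For the commutation with the Hamiltonian I would use the tensor structure explicitly. We have $\mathbb{K}_a = 1 \otimes K_a$ with $K_a$ a $c$-number diagonal matrix, while from the ladder-operator form the Hamiltonian is block-diagonal, $H_{00} = \mathrm{diag}(H_+, H_-, H_+, H_-)$, with operator entries $H_\pm$ acting on $L^2(\mathbb{R})$. Since $\mathbb{K}_a$ acts as the identity on the $L^2(\mathbb{R})$ factor, the differential operators $H_\pm$ never interact with it, and $[\mathbb{K}_a, H_{00}] = 1 \otimes [K_a,\, \mathrm{diag}(H_+,H_-,H_+,H_-)]$ reduces to the commutator of two objects diagonal in the same basis, which vanishes entrywise. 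Hence $[\mathbb{K}_1, H_{00}] = [\mathbb{K}_2, H_{00}] = 0$.

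There is essentially no genuine obstacle here: the computation is entirely diagonal. The only things one must be careful about are bookkeeping points, namely that the graded commutator really does reduce to the ordinary one (which follows purely from the degree $(0,0)$ assignment) and that the $\mathbb{C}^4$-diagonality of $H_{00}$ is read off correctly from the explicit matrix obtained via the ladder operators $A, A^\dag$. Once those two observations are in place, all three displayed identities follow at once.
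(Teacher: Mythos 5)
Your proof is correct and follows essentially the same route as the paper, which simply asserts these relations by ``direct computation'': all three identities are entrywise checks on matrices that are simultaneously diagonal over $\mathbb{C}^4$, using $H_{00} = \mathrm{diag}(H_+,H_-,H_+,H_-)$. Your explicit remark that the $\mathbb{Z}_2^2$-graded commutator collapses to the ordinary commutator for the degree-$(0,0)$ operators $\mathbb{K}_1,\mathbb{K}_2$ is a bookkeeping point the paper leaves implicit, and is worth stating.
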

\begin{corollary}\label{cor:SimEigen}
Simultaneous eigenstates of the Hamiltonian $H_{00}$ and the parity operators $\mathbb{K}_1$ and $\mathbb{K}_2$ exist.
\end{corollary}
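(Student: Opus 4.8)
The plan is to reduce the statement to the spectral theorem for a mutually commuting family of self-adjoint operators, and then to make the construction concrete by observing that the joint $\pm 1$-eigenspaces of the two parities are nothing but the four graded sectors. First I would record that, by the preceding Proposition, $H_{00}$, $\mathbb{K}_1$ and $\mathbb{K}_2$ commute with one another and that $(\mathbb{K}_1)^2 = (\mathbb{K}_2)^2 = \Sigma_0$, so each parity is a self-adjoint involution with spectrum $\{+1,-1\}$. Since $\mathbb{K}_1$ and $\mathbb{K}_2$ commute, $\mathcal{H}$ decomposes into their simultaneous eigenspaces, and reading the displayed diagonal forms off directly, the eigenvalue pairs $(+1,+1)$, $(+1,-1)$, $(-1,-1)$, $(-1,+1)$ single out exactly $\mathcal{H}_{00}$, $\mathcal{H}_{01}$, $\mathcal{H}_{11}$ and $\mathcal{H}_{10}$ respectively, each a copy of $L^2(\mathbb{R})$. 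This is precisely the reason the parities were defined as they were, and it shows that the simultaneous diagonalization of the two parity operators is automatic.

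The substantive step is to diagonalize $H_{00}$ inside each of these sectors. Because $[\mathbb{K}_1, H_{00}] = [\mathbb{K}_2, H_{00}] = 0$, the Hamiltonian preserves every joint parity eigenspace, so it is enough to treat its restriction to each one. From the block form displayed above, $H_{00}$ acts on each sector as one of the one-dimensional Schr\"odinger operators $H_+ = A A^\dag$ or $H_- = A^\dag A$. Here I would invoke the standing assumption $|W| \to \infty$ as $x \to \pm \infty$: it guarantees that each $H_\pm$ is a self-adjoint, bounded-below operator with purely discrete spectrum and a complete orthonormal basis of genuine $L^2(\mathbb{R})$ eigenfunctions. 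Any such eigenfunction, viewed as an element of the definite sector in which it lives, is then automatically an eigenvector of both $\mathbb{K}_1$ and $\mathbb{K}_2$ with the corresponding signs, hence a simultaneous eigenstate of all three operators; collecting these over the four sectors produces the desired eigenbasis.

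I expect the only real obstacle to be functional-analytic rather than algebraic. The commuting-involution part of the argument is entirely routine, but since $H_{00}$ is unbounded one must be careful that ``simultaneous eigenstate'' means a genuinely normalizable vector in $\mathcal{H}$ rather than a generalized plane-wave eigenfunction. This is exactly where the growth condition on $W$ does the work: it excludes continuous spectrum and forces an honest $L^2$-basis, so that the eigenstates asserted by the corollary actually exist. The positivity already proved for $H_{00}$, together with the factorizations $H_+ = A A^\dag$ and $H_- = A^\dag A$, is what secures the bounded-below, self-adjoint framework in which the spectral theorem delivers an eigenbasis in each sector.
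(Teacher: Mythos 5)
Your proposal is correct, and its core is the same observation the paper relies on: the corollary is stated immediately after the proposition establishing $[\mathbb{K}_1,\mathbb{K}_2]=0$, $(\mathbb{K}_1)^2=(\mathbb{K}_2)^2=\Sigma_0$ and $[\mathbb{K}_i,H_{00}]=0$, and the paper offers no further argument — it treats the existence of simultaneous eigenstates of commuting observables as standard. What you add, and what the paper leaves implicit, is the substance needed to make that folklore rigorous for an \emph{unbounded} Hamiltonian: commutativity alone does not produce eigenvectors unless $H_{00}$ actually has point spectrum. Your two supplements are exactly the right ones. First, the joint $(\pm 1,\pm 1)$-eigenspaces of the two involutions are precisely the four graded sectors $\mathcal{H}_{00},\mathcal{H}_{01},\mathcal{H}_{11},\mathcal{H}_{10}$ (your sign assignments $(+,+),(+,-),(-,-),(-,+)$ match the diagonal forms of $\mathbb{K}_1=\sigma_0\oplus(-\sigma_0)$ and $\mathbb{K}_2=\sigma_3\oplus\sigma_3$ in the paper's ordering), so simultaneous diagonalization of the parities is automatic and sector-wise. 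Second, on each sector $H_{00}$ restricts to $H_+=AA^\dag$ or $H_-=A^\dag A$, and the standing hypothesis $|W|\to\infty$ as $x\to\pm\infty$ — which the paper invokes earlier precisely to ensure a discrete spectrum with genuinely normalizable bound states — guarantees an $L^2$ eigenbasis in each sector, whose members are then eigenvectors of all three operators. So your route is not a different proof so much as the completed version of the paper's one-line intention; what it buys is the explicit identification of where the growth condition on $W$ is actually used, a point the paper's purely algebraic proposition-plus-corollary presentation obscures.
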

The above result will be essential in describing explicitly how the $\mathbb{Z}_2^2$-supercharges act on energy eigenstates. \par
From direct calculation we observe (rather naturally) that
\begin{align}
& \mathbb{K}_1 Q_{01} = + \;Q_{01} \mathbb{K}_1, && \mathbb{K}_2 Q_{01} = - \;Q_{01} \mathbb{K}_2,\\
& \mathbb{K}_1 Q_{10} = - \;Q_{01} \mathbb{K}_1, && \mathbb{K}_2 Q_{10} = + \;Q_{10} \mathbb{K}_2,\\
& \mathbb{K}_1 Z_{11} = - \;Z_{11} \mathbb{K}_1, && \mathbb{K}_2 Z_{11} = - \;Z_{11} \mathbb{K}_2.
\end{align}
\begin{remark}
In the definition of a $\mathbb{Z}_2^2$-SUSY QM system (see Definition \ref{def:ZQMSystem}) we do \underline{not} postulate the existence of the parity operators that commute with the Hamiltonian. However, the existence of such operators is essential in describing $\mathbb{Z}_2^2$-supersymmetry as a mapping  between states of different $\mathbb{Z}_2^2$-degrees.
\end{remark}
\subsection{Energy eigenstates and $\mathbb{Z}_2^2$-supersymmetry}
It is clear that
\begin{align}
& Q_{01}\mathcal{H}_{ij} \subset \mathcal{H}_{i(j+1)}, && Q_{10}\mathcal{H}_{ij} \subset \mathcal{H}_{(i+1)j}\,.
\end{align}
Due to Corollary \ref{cor:SimEigen}, we can present the action of the $\mathbb{Z}_2^2$-supercharges on energy eigenstates rather explicitly and construct the corresponding multiplets. In particular, we can label the simultaneous eigenstates of  $H_{00}$, $\mathbb{K}_1$ and $\mathbb{K}_2$ as
$$| E, i,j \rangle  \in \mathcal{H}_{ij}\,.$$
\begin{proposition}\label{prop:EnergyEigenValues}
For every state $|E, i,j\rangle$, with energy eigenvalue $E>0$, there exists two other states $|E, i , j+1\rangle$ and $|E, i+1,j\rangle$, with the same energy eigenvalue $E$.
\end{proposition}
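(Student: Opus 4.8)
The plan is to exploit the degree-shifting property of the supercharges together with the fact that $H_{00}$ is (up to a constant) the square of each supercharge. Concretely, I would start from a simultaneous eigenstate $|E,i,j\rangle \in \mathcal{H}_{ij}$ with $E>0$ and apply $Q_{01}$ and $Q_{10}$ to it. By the inclusions $Q_{01}\mathcal{H}_{ij} \subset \mathcal{H}_{i(j+1)}$ and $Q_{10}\mathcal{H}_{ij} \subset \mathcal{H}_{(i+1)j}$ stated just above, the resulting vectors $Q_{01}|E,i,j\rangle$ and $Q_{10}|E,i,j\rangle$ automatically live in $\mathcal{H}_{i(j+1)}$ and $\mathcal{H}_{(i+1)j}$ respectively. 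The first task is therefore to confirm these are the desired new states, i.e.\ that they are again energy eigenstates with the same eigenvalue $E$.

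The second step is to verify the eigenvalue is preserved. Since $[Q_{01},H_{00}]=0$ and $[Q_{10},H_{00}]=0$ (this is the content of the fourth bullet of the preceding theorem, where $[Q,H]=0$ follows from the Jacobi identity), the supercharges commute with the Hamiltonian in the ordinary sense. Hence
\[
H_{00}\bigl(Q_{01}|E,i,j\rangle\bigr) = Q_{01}\bigl(H_{00}|E,i,j\rangle\bigr) = E\,\bigl(Q_{01}|E,i,j\rangle\bigr),
\]
and identically for $Q_{10}$. So the two candidate states carry the same energy $E$.

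The third and genuinely essential step is to show these new states are nonzero, for otherwise the claim that they "exist" is vacuous. This is where the hypothesis $E>0$ enters decisively, and I expect it to be the main obstacle. Using $4Q_{01}^\dag Q_{01}=H_{00}$ (Hermiticity of $Q_{01}$, as in the positivity theorem), I would compute
\[
\parallel Q_{01}|E,i,j\rangle\parallel^2 = \langle E,i,j|\,Q_{01}^\dag Q_{01}\,|E,i,j\rangle = \tfrac{1}{4}\langle E,i,j|\,H_{00}\,|E,i,j\rangle = \tfrac{E}{4}>0,
\]
assuming $|E,i,j\rangle$ is normalized; the same computation with $Q_{10}$ gives $\parallel Q_{10}|E,i,j\rangle\parallel^2 = E/4 > 0$. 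Thus both $Q_{01}|E,i,j\rangle$ and $Q_{10}|E,i,j\rangle$ are nonvanishing whenever $E>0$, and up to normalization they are the sought-after states $|E,i,j+1\rangle$ and $|E,i+1,j\rangle$.

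The only subtlety remaining is whether these images are again \emph{simultaneous} eigenstates of the parity operators, so that the labels $|E,i,j+1\rangle$ and $|E,i+1,j\rangle$ are legitimate. Here I would invoke the intertwining relations $\mathbb{K}_2 Q_{01} = -Q_{01}\mathbb{K}_2$, $\mathbb{K}_1 Q_{01} = +Q_{01}\mathbb{K}_1$ and the analogous pair for $Q_{10}$: applying $\mathbb{K}_1,\mathbb{K}_2$ to the image states and commuting them past the supercharge shows each image is an eigenstate of both parity operators with the correctly shifted eigenvalues, confirming the index assignments. This last point is the delicate bookkeeping, but it reduces entirely to the commutation relations already derived, so no new computation is needed.
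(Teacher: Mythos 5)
Your proposal is correct and follows essentially the same route as the paper: apply $Q_{01}$ and $Q_{10}$ to the given eigenstate, use the degree-shifting inclusions for the labels, and use $[Q,H_{00}]=0$ to see the eigenvalue is preserved. Your argument is in fact slightly more complete than the paper's: the paper normalizes by $\frac{2}{\sqrt{E}}$ (implicitly relying on $\| Q_{01}|E,i,j\rangle \|^2 = E/4$) but never explicitly verifies that the image states are nonzero, whereas your computation via $4Q_{01}^\dag Q_{01} = H_{00}$ supplies exactly this step, without which the existence claim would be vacuous.
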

\begin{proof}
We define $|E, i,j+1\rangle  :=  \frac{2}{\sqrt{E}} Q_{01}|E, i,j\rangle$. Now we must check that this really is an energy eigenstate with eigenvalue $E$. This follows as $[H,Q] =0$. Explicitly
\addtolength{\jot}{-1em}
\begin{align*}
 &H_{00}|E ,i,j+1\rangle  = H_{01} \left(\frac{2}{\sqrt{E}} Q_{01}|E, i,j\rangle\right)
  \\[9pt]&=  \frac{2}{\sqrt{E}} Q_{01} \left(  H_{00}|E,i,j\rangle\right)
  = \frac{2}{\sqrt{E}} Q_{01} \left( E|E, i,j\rangle\right) = E\,  |E,i,j+1\rangle.
\end{align*}
Similarly, we define $|E, i+1, j\rangle  :=  \frac{2}{\sqrt{E}} Q_{10}|E, i,j\rangle$ and an almost identical calculation to the above shows that this state is also an energy eigenstate of energy $E$.
\end{proof}
\begin{remark}
Proposition \ref{prop:EnergyEigenValues} does \underline{not} depend on the specific model, but \underline{only} the existence of the pair of parity operators that commute with the Hamiltonian.
\end{remark}
Now let us proceed to the specifics of this model and define the corresponding  wave function in the position representation
$$\Phi^E_{ij}(x) := \langle x|  E, i,j \rangle.$$
In particular,
\def\arraystretch{1.4}
\addtolength{\jot}{1em}
$$\Phi^E_{00}(x)  =
\left( \begin{array}{c}
\psi^E_{00}(x) \\
0  \\
0 \\
0 \\
\end{array} \right),
$$
where $\psi^E_{00}(x) \in L^2(\mathbb{R})$.  Clearly it is an eigenvector of $H_{+}$, i.e., $H_+ \psi^E_{00} = E \,\psi^E_{00}$. Similar expression hold for the other degree components.  We again assume the energy to be strictly greater than zero.  We define the $\mathbb{Z}_2^2$-multiplet generated by $\Phi^E_{00}$ as all the states that can be `reached' by application of $Q_{01}$, $Q_{10}$. We will normalise the operators  for convenience and define
\begin{align*}
\bar{Q}_{01} := \frac{2}{\sqrt{E}}Q_{01}, && \bar{Q}_{10} := \frac{2}{\sqrt{E}}Q_{10}.
\end{align*}
Direct computation gives
 \begin{align}
 & \bar{Q}_{01}  \Phi^E_{00}(x)  =
\left( \begin{array}{c}
0 \\
\frac{\mathrm{i}}{\sqrt{E}}A^\dag \psi^E_{00}(x)  \\
0 \\
0 \\
\end{array} \right), && \bar{Q}_{10}  \Phi^E_{00}(x)  =
\left( \begin{array}{c}
0 \\
0  \\
0 \\
- \frac{1}{\sqrt{E}} A^\dag \psi^E_{00}(x) \\
\end{array} \right),\\
&\bar{Q}_{10} \bar{Q}_{01}  \Phi^E_{00}(x)  =
\left( \begin{array}{c}
0 \\
0   \\
-\mathrm{i}\psi^E_{00}(x) \\
0 \\
\end{array} \right), && \bar{Q}_{01}\bar{Q}_{10}  \Phi^E_{00}(x)  =
\left( \begin{array}{c}
0 \\
0  \\
\mathrm{i}\psi^E_{00}(x)\\
0 \\
\end{array} \right).
\end{align}
Note that the last two states are clearly not linearly independent. Thus, the
 $\mathbb{Z}_2^2$-multiplet generated by $\Phi^E_{00}$ is the sub-vector space of $\mathcal{H}$ spanned by the following elements:
 \begin{align}
 & \left( \begin{array}{c}
\psi^E_{00}(x)\\
0 \\
0 \\
0 \\
\end{array} \right), &&
\left( \begin{array}{c}
0 \\
A^\dag \psi^E_{00}(x)  \\
0 \\
0 \\
\end{array} \right),
&
\left( \begin{array}{c}
0 \\
0  \\
0 \\
 A^\dag \psi^E_{00}(x) \\
\end{array} \right),
&&
\left( \begin{array}{c}
0 \\
0   \\
\psi^E_{00}(x) \\
0 \\
\end{array} \right).
\end{align}
\begin{remark}
Similar conclusions can be drawn by looking at $\mathbb{Z}_2^2$-multiplets generated by another reference states.
\end{remark}
Clearly, we have the following.
\begin{proposition}
Energy levels with $E>0$ are four-fold degenerate.
\end{proposition}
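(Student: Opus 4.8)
The plan is to read the degeneracy off directly from the block-diagonal form of the Hamiltonian already computed, namely $H_{00} = \mathrm{diag}(H_+, H_-, H_+, H_-)$ acting on $\mathcal{H}_{00} \oplus \mathcal{H}_{01} \oplus \mathcal{H}_{11} \oplus \mathcal{H}_{10}$, where $H_+ = A A^\dagger$ and $H_- = A^\dagger A$. Writing $m_\pm(E) := \dim\ker(H_\pm - E)$ for the multiplicity of $E$ as an eigenvalue of the scalar operators $H_\pm$ on $L^2(\mathbb{R})$, the multiplicity of $E$ as an eigenvalue of $H_{00}$ is then $2\,m_+(E) + 2\,m_-(E)$, since two of the four diagonal blocks are $H_+$ and two are $H_-$ (the $\mathbb{Z}_2^2$-sectors $\mathcal{H}_{00},\mathcal{H}_{11}$ carry $H_+$ and $\mathcal{H}_{01},\mathcal{H}_{10}$ carry $H_-$). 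It therefore suffices to show that $m_+(E) = m_-(E) = 1$ for every $E > 0$ in the spectrum.

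First I would establish the isospectrality $m_+(E) = m_-(E)$ for $E > 0$. This is the standard supersymmetric intertwining already implicit in the multiplet computation above: from $H_+ A = A A^\dagger A = A H_-$ one sees that $A$ maps $\ker(H_- - E)$ into $\ker(H_+ - E)$, while $A^\dagger$ maps the other way, and on these eigenspaces $A^\dagger A = E$ and $A A^\dagger = E$. Hence $A$ and $E^{-1}A^\dagger$ are mutually inverse isomorphisms for $E>0$, so $m_+(E) = m_-(E)$. The nonvanishing that makes this work is precisely $\|A^\dagger \psi\|^2 = \langle \psi, A A^\dagger \psi\rangle = E\,\|\psi\|^2 > 0$, the same fact that already guaranteed the partner states in the explicit multiplet generated by $\Phi^E_{00}$ are nonzero.

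Second I would show each level is simple, $m_+(E) = 1$. Here the confining hypothesis $|W| \to \infty$ as $x \to \pm\infty$ finally enters: it makes $H_\pm$ one-dimensional Schr\"odinger operators on the whole line with purely discrete spectrum, and bound states of such operators are non-degenerate. Concretely, any two $L^2$ solutions of the second-order ODE $(H_+ - E)\phi = 0$ have constant Wronskian, which must vanish because both decay at infinity, forcing the solutions to be proportional. Combining this with the isospectrality gives the total multiplicity $2\cdot 1 + 2\cdot 1 = 4$; equivalently, the four explicit multiplet vectors generated from $\Phi^E_{00}$ (one in each $\mathbb{Z}_2^2$-sector, all nonzero for $E>0$) are linearly independent and, by simplicity, exhaust the eigenspace.

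The only genuinely non-formal step is the simplicity claim $m_+(E) = 1$. The block structure and the isospectrality are immediate from the explicit operators, and by themselves they only yield that the degeneracy is a multiple of four; pinning it down to exactly four requires the non-degeneracy of one-dimensional bound states, i.e. the Wronskian argument together with the confining behaviour of $W$. This is the step I would state most carefully, since it is the one place where the assumption $|W| \to \infty$ is actually used.
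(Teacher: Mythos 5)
Your proposal is correct, and it is actually more complete than the paper's own argument. The paper treats this proposition as an immediate corollary of the preceding multiplet computation: starting from a reference eigenstate $\Phi^E_{00}$ and applying the normalized supercharges $\bar{Q}_{01}$, $\bar{Q}_{10}$, one obtains four nonzero states, one in each $\mathbb{Z}_2^2$-sector (hence automatically linearly independent), and the proposition is then stated with no further justification. That argument is, at the operator level, the same intertwining you use for isospectrality --- the nonvanishing of $A^\dagger\psi^E_{00}$ is exactly your $\|A^\dagger\psi\|^2 = E\|\psi\|^2 > 0$ --- but on its own it only establishes a \emph{lower} bound of four on the degeneracy. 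What your proof adds, and what the paper leaves implicit (effectively borrowing it from the known analysis of the Witten model, whose spectrum it invokes in the harmonic oscillator subsection), is the upper bound: the block decomposition $H_{00} = \mathrm{diag}(H_+,H_-,H_+,H_-)$ reduces the count to $2m_+(E)+2m_-(E)$, and the Wronskian argument for one-dimensional bound states under the confining hypothesis $|W|\to\infty$ gives $m_\pm(E)=1$, so the degeneracy is exactly four rather than merely a multiple of four. You correctly flag this simplicity step as the one genuinely non-formal ingredient; it is precisely the step the paper does not spell out, so your version can be read as the rigorous completion of the paper's sketch.
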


\subsection{Zero energy states}
To examine the nature of the zero energy states, let us consider the following states (written in the position representation),
\begin{align}
 & \left( \begin{array}{c}
\psi_{00}(x)\\
0 \\
0 \\
0 \\
\end{array} \right), &&
\left( \begin{array}{c}
0 \\
\psi_{01}(x)  \\
0 \\
0 \\
\end{array} \right),
&
\left( \begin{array}{c}
0 \\
0  \\
\psi_{11}(x) \\
 0 \\
\end{array} \right),
&&
\left( \begin{array}{c}
0 \\
0   \\
0 \\
\psi_{10}(x) \\
\end{array} \right).
\end{align}

\bigskip

Each component belongs to $L^2(\mathbb{R})$. Applying $Q_{01}$ and separately $Q_{10}$ to these states and insisting that the result vanishes gives the conditions
\begin{align}
& A^\dag \psi_{00} =0, && A^\dag \psi_{11} =0,\\
& A \psi_{01} =0 , && A \psi_{10} =0.
\end{align}
 We know from the analysis of  Witten's model \cite{Wit} that all these conditions cannot be consistent for non-trivial wave functions - some of them or all of these wave functions must be zero. Thus, the zero energy states (in the position representation) are {either} of the form
 \begin{equation}
X(x) = a \left( \begin{array}{c}
\chi(x)\\
0 \\
0 \\
0 \\
\end{array} \right)
+
b \left( \begin{array}{c}
0\\
0 \\
\chi(x) \\
0 \\
\end{array} \right) \in \mathcal{H}_{00}\oplus \mathcal{H}_{11},
 \end{equation}
 or
  \begin{equation}
\Psi(x) = c  \left( \begin{array}{c}
0\\
\psi(x) \\
0 \\
0 \\
\end{array} \right)
+
d \left( \begin{array}{c}
0\\
0 \\
0 \\
\psi(x)\\
\end{array} \right) \in \mathcal{H}_{01}\oplus \mathcal{H}_{10},
 \end{equation}
 where $A^\dag \chi =0$, $A \psi =0$ and $a,d,c$ and $d \in \mathbb{C}$.

\bigskip

We have thus proved the following.
\begin{proposition}
A zero energy state is either non-existent or two-fold degenerate and belongs to either  $\mathcal{H}_{00}\oplus \mathcal{H}_{11}$ or $\mathcal{H}_{01}\oplus \mathcal{H}_{10}$, i.e., to either the even or odd subspaces of $\mathcal{H}$.
\end{proposition}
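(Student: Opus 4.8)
The plan is to reduce the zero-energy problem to the four first-order kernel equations already displayed and then invoke the normalizability dichotomy familiar from Witten's model. First I would solve the conditions $A^{\dagger}\psi_{00}=A^{\dagger}\psi_{11}=0$ and $A\psi_{01}=A\psi_{10}=0$ explicitly. Since $A=\frac{\mathrm{i}p}{\sqrt{2m}}+W$ and $A^{\dagger}=-\frac{\mathrm{i}p}{\sqrt{2m}}+W$ are first-order differential operators, each kernel is one-dimensional: $A^{\dagger}\chi=0$ is solved by $\chi(x)\propto\exp\!\big(\sqrt{2m}\int^{x}W(y)\,\mathrm{d}y\big)$ and $A\psi=0$ by $\psi(x)\propto\exp\!\big(-\sqrt{2m}\int^{x}W(y)\,\mathrm{d}y\big)$. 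Thus $\psi_{00}$ and $\psi_{11}$ are each forced to be a scalar multiple of the single function $\chi$, while $\psi_{01}$ and $\psi_{10}$ are each a scalar multiple of the single function $\psi$.

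The key structural observation is that $\chi(x)\,\psi(x)$ is a nonzero constant. Hence, by Cauchy--Schwarz, $\chi$ and $\psi$ cannot both lie in $L^{2}(\mathbb{R})$: were they both square-integrable, their product would be an integrable constant, forcing it to vanish, a contradiction. Therefore at most one of $\chi$ and $\psi$ is normalizable. The imposed asymptotic condition $|W|\to\infty$ as $x\to\pm\infty$ makes each exponential eventually monotone, since $\chi'=\sqrt{2m}\,W\chi$ and $\psi'=-\sqrt{2m}\,W\psi$, so each of $\chi,\psi$ is either square-integrable or grows without bound; which of the mutually exclusive situations occurs is dictated solely by the signs of the limits of $W$ at $\pm\infty$.

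This yields the trichotomy. If $\chi\in L^{2}(\mathbb{R})$ (so $\psi\notin L^{2}$), then $\psi_{01}=\psi_{10}=0$ while $\psi_{00}$ and $\psi_{11}$ may be arbitrary multiples of $\chi$; the zero-energy space is then spanned by the two vectors $(\chi,0,0,0)^{\mathsf{T}}$ and $(0,0,\chi,0)^{\mathsf{T}}$, hence is two-fold degenerate and lies in $\mathcal{H}_{00}\oplus\mathcal{H}_{11}$. Symmetrically, if $\psi\in L^{2}(\mathbb{R})$ the zero-energy space is spanned by $(0,\psi,0,0)^{\mathsf{T}}$ and $(0,0,0,\psi)^{\mathsf{T}}$, lying in $\mathcal{H}_{01}\oplus\mathcal{H}_{10}$. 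If neither is normalizable there are no zero-energy states at all. In each case the claimed conclusion follows.

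The step requiring the most care is the normalizability dichotomy: establishing rigorously that $\chi$ and $\psi$ cannot simultaneously be square-integrable (through the constancy of their product) and that the condition $|W|\to\infty$ genuinely forces each candidate into the clean ``normalizable or divergent'' alternative rather than some borderline intermediate decay. This is precisely the input imported from the analysis of Witten's model, and once it is granted the remainder is bookkeeping over which degree slots survive.
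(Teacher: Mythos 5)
Your proof is correct and follows essentially the same route as the paper: reduce the zero-energy problem to the four kernel conditions $A^{\dagger}\psi_{00}=A^{\dagger}\psi_{11}=0$, $A\psi_{01}=A\psi_{10}=0$, then use the fact that the kernels of $A$ and $A^{\dagger}$ cannot both contain normalizable functions, which forces the surviving components into slots $(00,11)$ or $(01,10)$ and gives the two-fold degeneracy or non-existence. The only difference is that where the paper imports this incompatibility from the analysis of Witten's model by citation, you prove it outright (one-dimensional kernels, the constancy of the product $\chi\psi$ plus Cauchy--Schwarz, and the $|W|\to\infty$ dichotomy), making your argument self-contained.
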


\subsection{The Harmonic Oscillator Potential}
Note that the spectrum our $\mathbb{Z}_2^2$-SUSY QM model is identical to that of the corresponding Witten model. The key difference is in the degeneracy of the energy eigenstates: everything is doubled.  In order to illustrate the structure of our  model  explicitly we will examine the potential
\begin{equation}\label{eqn:HarPot}
W(x) =  {-} \sqrt{\frac{m}{2}} \omega x\,
\end{equation}
where $\omega >0$. Clearly, we have
\begin{align}
H_+ = \frac{p^2}{2m} + \frac{m \omega^2}{2}x^2 - \frac{\hbar}{2}\omega, && H_- = \frac{p^2}{2m} + \frac{m \omega^2}{2}x^2 + \frac{\hbar}{2}\omega\,.
 \end{align}
 We have chosen the sign in the potential so that the zero energy states belong to $\mathcal{H}_{00} \oplus \mathcal{H}_{11}$. This is just for our convenience. We see that the  Hamiltonians are just the standard Hamiltonians for the harmonic oscillator with constant shifts by one unit of the energy. Thus, the wave functions are just the usual wave functions for the harmonic oscillator. As standard we define
$$\psi_n(x) = \frac{1}{\sqrt{2^n n!}} \left(\frac{m \omega}{\pi \hbar} \right)^{\frac{1}{4}} \exp \left(- \frac{m w}{2 \hbar} x^2 \right)~ H_{n}\left( \sqrt{\frac{m \omega}{\hbar}} x\right)\,,$$
were $H_n$ is the $n$-th Hermite polynomial.  It is straightforward to see that the spectrum of $H_{00}$ with the potential \eqref{eqn:HarPot} is
$$0, ~\hbar \omega, ~2 \hbar \omega, ~3 \hbar \omega, ~\cdots .$$
This is of course, identical to the Witten model. The zero energy ground states are (weighted) sums of the vectors
\begin{align}
&  \left( \begin{array}{c}
\psi_0(x)\\
0 \\
0 \\
0 \\
\end{array} \right), &&
\left( \begin{array}{c}
0\\
0 \\
\psi_0(x) \\
0 \\
\end{array} \right)\,.
\end{align}
 The excited states $(n\geq 1)$ are then (weighted) sums of the vectors
 \begin{align}
&  \left( \begin{array}{c}
\psi_n(x)\\
0 \\
0 \\
0 \\
\end{array} \right), &&
\left( \begin{array}{c}
0\\
\psi_{n-1}(x) \\
0 \\
0 \\
\end{array} \right)\,,
\nonumber&  \left( \begin{array}{c}
0\\
0 \\
\psi_n(x)\\
0 \\
\end{array} \right), &&
\left( \begin{array}{c}
0\\
0 \\
0 \\
\psi_{n-1}(x) \\
\end{array} \right)\,.
\end{align}
We see the  two-fold degeneracy in a zero energy ground state and the four-fold degeneracy for the excited states.

\section{\scshape Conclusions}\label{sec:Concl}
Thus, we have constructed a double-graded supersymmetric quantum mechanical model in order to illustrate that employing ``higher gradings'' in physics, and in particular, quantum mechanics is possible and can lead to interesting results. Specifically, we have a realization of the $\mathbb{Z}_2^2$-supertranslation algebra as proposed in \cite{Bru} via a quantum mechanical model. This model is not equivalent to the models with para-Grassmann variables \cite{ger/tka84,ger/tka85}
and parasupersymmetries \cite{Bec/Deb2,Rub/Spi} , nor with various versions of the
(extended) supersymmetric quantum mechanics
\cite{aku/ceb/pas,aku/dup2,aku/pas,fub/rab}. Indeed, the direct physical interpretation of   $\mathbb{Z}_2^2$-SUSY QM is currently lacking. We expect more involved models of   $\mathbb{Z}_2^n$-supersymmetric quantum mechanics, i.e., ``multiple-graded supersymmetric quantum mechanics'' to be important and lead to more unexpected results.

\bigskip

\textbf{Acknowledgements}.
\addcontentsline{toc}{section}{Acknowledgements}
The second author (S.D.) is grateful to V. Akulov and V. Tkach for fruitful discussions, to V. N. Tolstoy for sending his papers, and is very thankful to N. Poncin for numerous conversations and his  kind hospitality at University of Luxembourg, where this paper was initiated and started in December, 2018.

\newpage

\mbox{}
\vskip 2cm

\end{document}